\documentclass{WileyMSP-template}

\usepackage[utf8]{inputenc}
\usepackage{ragged2e}
\usepackage{lmodern}
\usepackage{helvet}
\usepackage{geometry}
\usepackage{amsmath}
\usepackage{amssymb}
\usepackage{amsthm}
\usepackage{siunitx}
\usepackage{sansmath}
\usepackage{mdframed}

\usepackage{graphicx}
\usepackage{grffile}
\usepackage{caption}
\usepackage{subcaption}
\usepackage{tikz}
\usetikzlibrary{external}
\usepackage{pgfplots}
\pgfplotsset{compat=newest}
\usepackage{tikzscale}
\usepackage[percent]{overpic}
\usepackage{epstopdf}
\usepackage{import}
\usepackage{placeins}

\usepackage{booktabs}
\usepackage{supertabular}
\usepackage{array}
\usepackage{multirow}
\usepackage{rotating}
\usepackage{tabularx}
\usepackage[table]{xcolor} 

\usepackage[ruled, vlined, linesnumbered, english]{algorithm2e}
\usepackage{listings}

\usepackage{csquotes}
\usepackage{xifthen}
\usepackage[pdfpagelabels, hidelinks]{hyperref}
\usepackage[acronym, nonumberlist, section]{glossaries}

\usepackage{academicons}
\newcommand{\orcid}[1]{\href{https://orcid.org/#1}{\textcolor[HTML]{A6CE39}{\aiOrcid}}}

\usepgfplotslibrary{fillbetween}

\definecolor{cX1}{HTML}{0072B2}   
\definecolor{cX2}{HTML}{D55E00}   
\definecolor{cBnd}{HTML}{444444}  

\pgfplotsset{
    sx1/.style      = {color=cX1},                       
    sx2/.style      = {color=cX2},                       
    unconstr/.style = {line width=1.0pt,                 
                       dash pattern=on 5pt off 3pt, line cap=round},
    aware/.style    = {line width=1.6pt, solid,          
                       line cap=round, line join=round},
    bnd/.style      = {color=cBnd, line width=0.8pt,     
                       dash pattern=on 1.2pt off 2.4pt, line cap=round},
}

\newtheorem{definition}{Definition}
\newtheorem{theorem}{Theorem}

\begin{document}
\justifying

\pagestyle{fancy}
\rhead{\includegraphics[width=2.5cm]{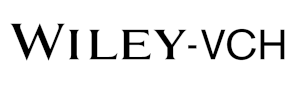}}

\begin{center}
\title{Approximate Feedback Nash Equilibria in Constrained Differential Games via Model Predictive Control with Upper Bound Guarantees}    
\end{center}

\maketitle


\author{Balint Varga$^{*1}$,}
\author{Karl Handwerker$^{1}$,}
\author{Imre Rudas$^2$,}
\author{Peter Galambos$^2$,}



\begin{affiliations}
$^*$ Corresponding Author: Balint Varga, Email Address: \url{balint.varga2@kit.edu} 

$^1$ B. Varga and K. Handwerker are with the Institute of Control Systems, Karlsruhe Institute for Technology, Karlsruhe, D-76131, Germany

$^2$ I. Rudas and P. Galambos are with the Antal Bejczy Center for Intelligent Robotics, Research and Innovation Center of Obuda University, Budapest, Hungary 

B.Varga and K.Handwerker contributed equally to this work.

\end{affiliations}


\keywords{Differential Games, Model Predictive Control, Feedback Nash Equilibrium, Constrained Games, Multi-Agent Control}

\begin{abstract}
Physical human--machine interaction and other multi-agent control settings require decision-making policies that adapt online. Differential games provide a principled framework where each agent optimizes an individual objective while anticipating the other's response. The relevant solution concept in many applications is the feedback Nash equilibrium (FNE), which yields time-consistent state-feedback strategies. However, computing the FNE is demanding and becomes intractable when state and input constraints must be enforced, motivating the need for approximate methods. This paper presents a Model Predictive Control (MPC) approach that approximates infinite-horizon FNE trajectories through repeated solution of finite-horizon open-loop games. An auxiliary-game formulation is introduced that selects prediction horizons and terminal costs to approximate the feedback-game optimality conditions. The approach is extended to incorporate hard constraints via a constrained open-loop game formulation. For the unconstrained setting, an analytic upper bound on the state-trajectory deviation between the MPC-induced and FNE trajectories is derived, enabling quantitative performance certification. Numerical examples illustrate the effectiveness of the proposed method compared with baseline approaches from the literature. 
\end{abstract}

\section{Introduction}
\label{sec:introduction}

Physical human--machine interaction (pHMI) and shared-control systems require controllers that adapt online to a partner's actions while enforcing safety constraints and actuator limits. In driving-assistance and haptic guidance, a common design goal is to continuously negotiate control authority and intent through the shared interface \cite{abbink2012haptic,mulder2012sharing,varga_human_2024}. These applications involve real-time interaction between human and machine, where both agents continuously adjust their strategies based on the evolving state of the interaction. This iterative shared control paradigm, often implemented through haptic feedback, demands control algorithms that can respond to changing circumstances within tight computational budgets.

Game-theoretic formulations are a well-established fit for these settings because they explicitly model strategic coupling between agents and the resulting closed-loop interaction \cite{li2019differential,music2020haptic}. The application of differential games to human--machine interaction is particularly well-suited because it treats both agents (human and machine) as players in a noncooperative game. Each player optimizes their own objective while anticipating the other player's response, which naturally captures the interaction in shared control settings. Recent work on human-in-the-loop shared control emphasizes the importance of online adaptation while maintaining safety guarantees \cite{HVROC_2023_Kille,varga2024toward}.

In many interactive control problems, the most appropriate equilibrium concept is a \emph{feedback Nash equilibrium} (FNE) because it produces time-consistent state-feedback policies. However, computing FNEs for general nonlinear differential games is rarely tractable in real time, as it involves solving coupled Hamilton--Jacobi--Bellman equations \cite{basar_dynamic_1999}. Even in the linear--quadratic (LQ) case, feedback and open-loop equilibria can differ substantially, so repeatedly applying open-loop solutions may yield systematic closed-loop mismatch \cite{engwerda_uniqueness_2008}. Existing numerical game solvers can compute local equilibria in nonlinear settings and can incorporate constraints, but their computational cost is often incompatible with the sampling rates required in pHMI and shared-control applications \cite{fridovich-keil_iterative_2020,le_cleach_algames_2020}.

Model predictive control (MPC) is the standard framework for online optimization with explicit constraint handling via repeated finite-horizon re-optimization \cite{b_rawlings_model_2024,mayne2000constrained}. This motivates moving-horizon and game-theoretic MPC schemes that repeatedly solve finite-horizon local games \cite{van_den_broek_moving_1999,saltari2022movinghorizon,williams2018brmpc,ko2020shared}. Nevertheless, solving finite-horizon open-loop games inside MPC in general fails to recover the infinite-horizon FNE of the underlying interaction, and the resulting gap resists a priori quantification.

This paper addresses approximate computation of FNE behavior in constrained differential games using an MPC-based construction that preserves standard feasibility machinery while explicitly targeting feedback-game optimality. The main contributions are:
\begin{itemize}
    \item an \emph{auxiliary-game} MPC formulation that selects the prediction horizon and terminal costs to reduce the mismatch to the infinite-horizon feedback-game optimality conditions, including coupling terms between players;
    \item an extension to hard state and input constraints via constrained open-loop local-game solves; and
    \item for the unconstrained setting, an analytic upper bound on the deviation between the MPC-induced trajectory and the infinite-horizon FNE trajectory, enabling quantitative certification of approximation quality.
\end{itemize}

The remainder of the paper is organized as follows: Section~\ref{sec:related_work} reviews related work; Section~\ref{sec:preliminaries} summarizes differential-game preliminaries; Section~\ref{ch:Approx_FNE} presents the auxiliary-game MPC approximation for constrained games; Section~\ref{sec:error_estimation} derives upper bounds for the unconstrained trajectory deviation; and Section~\ref{sec:numerical_examples} reports numerical results.

\section{Related Work}
\label{sec:related_work}

Differential games are commonly used to represent strategic coupling between agents (e.g., human and machine) in real-time interactive control. Across the literature, methods typically trade off among (i) the equilibrium solution concept (feedback Nash, open-loop Nash, or pragmatic non-equilibrium concepts), (ii) the computational machinery used (PDE/Riccati, iterative optimal control, learning), and (iii) their ability to explicitly enforce hard state/input constraints online.

\subsection{Exact Feedback Nash Equilibria}\label{sec:rw_exact}
For general nonlinear differential games, an exact Feedback Nash Equilibrium (FNE) is characterized by a coupled system of Hamilton--Jacobi--Bellman (HJB) equations. Solving these coupled PDEs is rarely feasible beyond low-dimensional settings due to the curse of dimensionality \cite{basar_dynamic_1999}. In the unconstrained linear--quadratic (LQ) case, the FNE can be computed via coupled Riccati equations, for which specialized numerical algorithms and existence/uniqueness results are available \cite{engwerda_algorithms_2007}. However, for applications involving hard actuation limits and safety constraints (as in pHMI and shared control), Riccati-based solutions do not directly apply, motivating approximate or numerically tractable alternatives.

\subsection{Learning-Based Approximations of Feedback Structure}
A wide range of approaches use approximate dynamic programming and reinforcement learning to reduce the computational burden of coupled HJB equations while preserving feedback dependence.
Within the LQ domain, policy-iteration schemes have been proposed that replace the coupled HJB system with iterative solutions of Lyapunov-like equations \cite{vamvoudakis_multi-player_2011}, and off-policy learning approaches have been shown to recover FNE solutions under unknown dynamics assumptions \cite{li_policy_2019}.
For nonlinear games, value-function approximation (e.g., neural parameterizations) and actor--critic architectures are commonly used to approximate equilibrium strategies \cite{zhou_nonlinear_2024, jiang_approximate_2020}.
A recurring limitation for safety-critical settings is that constraints are frequently handled indirectly (e.g., via penalties or saturation), which complicates hard feasibility guarantees \cite{abouheaf_policy_2019, mu_policy-iteration-based_2021}.

\subsection{Constrained Receding-Horizon Game Formulations}
Hard constraints are best addressed in a receding-horizon framework, where MPC provides standard machinery for online optimization with explicit feasibility considerations \cite{b_rawlings_model_2024}. This has led to \emph{Game-MPC} formulations in which agents repeatedly solve finite-horizon dynamic games and apply the resulting actions in a receding-horizon manner \cite{sani_realtime_2023}. Modern iterative solvers (e.g., based on sequential quadratic approximations) enable online computation of local equilibria in nonlinear settings \cite{le_cleach_algames_2020, fridovich-keil_iterative_2020,di_lamperski_ddp_2019,le_cleach_algames_2022}.

A central methodological gap is that many tractable online schemes compute finite-horizon \emph{open-loop} Nash equilibria (OLNE). Because open-loop and feedback equilibria can differ substantially, repeated OLNE computation does not, in general, recover the infinite-horizon FNE \cite{engwerda_uniqueness_2008}. Recent work has proposed intermediate equilibrium concepts and problem reformulations aimed at narrowing this discrepancy, including moving-horizon equilibrium notions and variational-inequality viewpoints \cite{saltari_nash_2022, benenati_linear-quadratic_2024}. 
In contrast, the approach taken here preserves standard open-loop MPC machinery for constraint handling while introducing an auxiliary-game construction to approximate the infinite-horizon FNE and to provide an analytical bound on the induced trajectory deviation.

\subsection{Research Gap and Contributions}
What is missing in the literature is a receding-horizon game formulation that keeps the constraint-handling of open-loop MPC while explicitly targeting, and approximating, the infinite-horizon feedback Nash equilibrium.
Therefore, this paper develops an MPC-based method that approximates infinite-horizon FNE trajectories using repeated solutions of finite-horizon open-loop games. The key contributions are as follows: (i) we introduce an \emph{auxiliary-game} MPC formulation that selects horizon/terminal design to better match feedback-game optimality conditions, (ii) extend the approach to incorporate state and input constraints via constrained open-loop game solves, and (iii) provide an analytic upper bound on the state-trajectory deviation between the MPC-induced closed loop and the feedback Nash equilibrium in the unconstrained case.

\section{Dynamic Game-Theoretic Preliminaries}\label{sec:preliminaries}
This section summarizes the concepts of differential games used in this paper; standard references are \cite{basar_dynamic_1999,engwerda_lq_2005}, where a comprehensive background can be found. Motivated by pHMI and shared-control scenarios in which agents typically pursue individual objectives \cite{chang_collaborative_2023}, we focus on noncooperative $N$-player differential games.

\begin{definition}[Differential Game] \label{2_DiffaGame}
Let the set of players be $\mathbb{P}=\{1,\dots,N\}$. A (finite- or infinite-horizon) differential game is defined by the tuple
\[
\mathcal{G} \;=\; \big(\mathbb{P},\, \mathbb{R}^n,\, \{ \mathbb{R}^{m_i} \}_{i\in\mathbb{P}},\, f,\, \{J_i\}_{i\in\mathbb{P}}\big),
\]
where the shared state $x(\cdot)\in\mathbb{R}^n$ evolves according to
\[
\dot{x}(t)=f\!\big(x(t),u_1(t),\dots,u_N(t),t\big),\qquad x(0)=x_0,
\]
with control inputs $u_i(\cdot):[0,T]\to\mathbb{R}^{m_i}$ chosen by player $i$. Each player $i\in\mathbb{P}$ seeks to minimize the cost functional
\[
J_i(u_1,\dots,u_N)
\;=\;
h_i(x(T),T)
\;+\;
\int_{0}^{T} r_i\!\big(x(t),u_1(t),\dots,u_N(t),t\big)\,dt,
\]
over a finite horizon $T<\infty$, or over the infinite-horizon case obtained as $T\to\infty$, subject to the system dynamics.
\end{definition}

Available information determines admissible strategies and is formalized via the information set. Two information structures are considered: open-loop and feedback.

\begin{definition}[Information Set and Information Structure]\label{def_InformationSet}
The information set of player $i$ at time $t$ is
\[
\eta_i(t) \in P_{\neg \emptyset}(\{x_0, x(t)\}).
\]
\begin{itemize}
    \item Open-loop (OL): $\eta_i(t)=\{x_0\}$ for all $t\in[0,T]$.
    \item Feedback (FB): $\eta_i(t)=\{x(t)\}$ for all $t\in[0,T]$.
\end{itemize}
\end{definition}

In the OL case, players use only $x_0$ and fix their controls at $t=0$ without further adaptation.

\begin{definition}[Open-Loop Strategy]\label{def_open_loop_strategy}
An open-loop strategy $\gamma_i$ for player $i\in\mathbb{P}$ is
\[
u_i(t)=\gamma_i(x_0,t), \quad \forall t\in[0,T],
\]
where $\gamma_i$ is continuous in $t$ for every $x_0\in\mathbb{R}^n$. The set of such strategies is denoted by $\Gamma^{\mathrm{OL}}$.
\end{definition}
In the FB case, controls adapt to the current state during the game.

\begin{definition}[Feedback Strategy]\label{def_feedback_strategy}
A feedback strategy $\gamma_i$ for player $i\in\mathbb{P}$ is
\[
u_i(t)=\gamma_i(x(t),t), \quad \forall t\in[0,T],
\]
where $\gamma_i$ is continuous in $t$ and globally Lipschitz in $x$. The set of such strategies is denoted by $\Gamma^{\mathrm{FB}}$.
\end{definition}

Linear-quadratic (LQ) differential games form a subclass of differential games with linear dynamics and quadratic cost functionals. Their structure often permits analytical solutions and efficient numerical methods. LQ games are defined as follows.

\begin{definition}[Linear-Quadratic Differential Game \cite{engwerda_lq_2005}]\label{def_LQ_differential_game}
An LQ differential game is defined by the same elements as Definition \ref{2_DiffaGame}, with linear dynamics
\begin{equation} \label{eq:linear_system}
    \dot{x}(t) = A x(t) + \sum_{i=1}^N B_i u_i(t),
\end{equation}
where $x(t)\in\mathbb{R}^n$, $u_i(t)\in\mathbb{R}^{m_i}$, and $A$, $B_i$, $i\in\mathbb{P}$, are matrices of appropriate dimensions. The pairs $(A,B_i)$, $i\in\mathbb{P}$, are stabilizable. In addition, the cost function of player $i$ is quadratic:
\begin{equation}
    J_i = \frac{1}{2} x^\top(t_0+T) Q_{i,T} x(t_0+T)
    + \frac{1}{2} \int_0^T \Big( x^\top(t) Q_i x(t)
    + \sum_{j=1}^N u_j^\top(t) R_{ij} u_j(t) \Big)\, dt,
\end{equation}
where $Q_{i,T}$, $Q_i$, and $R_{ij}$ are symmetric for all $i,j\in\mathbb{P}$ and $R_{ii}>0$.
\end{definition}

LQ problems are more tractable than their nonlinear counterparts \cite{chiu_what_2024}. Nonlinear problems are often approximated by linearizing the dynamics around an operating point or trajectory and solving the resulting LQ problem, which yields fast and sufficiently accurate solutions in several engineering applications.

The remainder of the paper uses these structural properties to construct an MPC-based approximation of infinite-horizon FNE behavior while retaining online tractability and compatibility with constraints.

\section{Approximation of Feedback Nash Equilibria via Constrained MPC}\label{ch:Approx_FNE}

This section presents the main results of our work: a novel approach for approximating infinite-horizon FNE in LQ differential games using a constrained MPC framework. 

We consider infinite-horizon linear--quadratic (LQ) differential games with feedback information structure and seek computationally tractable approximations of the associated feedback Nash equilibrium (FNE). The approach is based on a receding-horizon (MPC) construction: at each time, a finite-horizon \emph{open-loop} game is solved and only the first control action is applied. This yields a state-feedback law through repeated re-optimization. While the presentation below focuses on two players, the formulation extends to $N$ players by augmenting the dynamics with additional inputs and adding one cost functional per player; see, e.g., Chapters 7--8 in \cite{engwerda_lq_2005}.

Section~\ref{Moving_Horizon_Game} summarizes the baseline unconstrained moving-horizon game formulation of \cite{van_den_broek_moving_1999} and highlights the resulting feedback law, and Section~\ref{subsec_analytic_solution_to_the_unconstrianed_mpc} recalls its closed-form analytical solution. Section~\ref{sec_Auxiliary_Game_with_Terminal_Cost} introduces the auxiliary-game extension used in this work, in which local \emph{auxiliary} games include terminal penalties to better approximate the infinite-horizon FNE, and Section~\ref{sec_auxiliary_game_constrained_description} extends this construction to state and input constraints to approximate constrained infinite-horizon FNE trajectories.

\subsection{Model Predictive Control in Dynamic Games}\label{Moving_Horizon_Game}

Model predictive control (MPC) constructs a feedback policy by repeatedly solving a finite-horizon optimization problem and applying only the initial control move. In the dynamic-game setting of~\cite{van_den_broek_moving_1999}, the finite-horizon problem is an \emph{open-loop} game; in the continuous-time limit treated in this section, this local game is re-solved at every instant $t$ and only the instantaneous control value is applied, so that the resulting receding-horizon implementation is a state-feedback policy with static gain $K^{MPC}$, even though each local problem uses open-loop information. Throughout, we refer to the approach of \cite{van_den_broek_moving_1999} as the \emph{unconstrained MPC approach}.

Consider an LQ differential game over an infinite time horizon with feedback information structure. In the MPC context, this game is referred to as the \emph{global game}.

\begin{definition}[Global Game in the unconstrained MPC Solution Concept \cite{van_den_broek_moving_1999}]\label{def_global_game} \leavevmode\\
	Let the state dynamics of an infinite-horizon LQ differential game with feedback information structure be
	\begin{equation}\label{MPC_dynamic_system}
		\dot{x}(t) = Ax(t) + \sum_{i \in \mathbb{P}} B_i u_i(t,x(t)),
	\end{equation}
	where $x(t)\in\mathbb{R}^n$ and $u_i(t,x(t))\in\mathbb{R}^{m_i}$ for $i=1,\dots,N$. The strategy space of player $i$, denoted $\mathcal{U}_i$, is a subset of all maps from $[0,\infty)\times\mathbb{R}^n$ to $\mathbb{R}^{m_i}$ such that for each $(u_1,\dots,u_N)\in\mathcal{U}_1\times\cdots\times\mathcal{U}_N$ there exists a unique state trajectory on $[0,\infty)$ for each initial state. Let the cost functional of player $i$ be
	\begin{equation}
		J_i(u_1,\dots,u_N,x_0)
		= \frac{1}{2}\int_0^\infty \left(x(t)^\top Q_i x(t) + \sum_{j\in\mathbb{P}} u_j(t,x(t))^\top R_{ij}u_j(t,x(t))\right)\,dt,
		\quad i\in\mathbb{P},
	\end{equation}
	where $Q_i = Q_i^\top \succeq 0$ and $R_{ii} = R_{ii}^\top \succ 0$ for all $i\in\mathbb{P}$, $R_{ij} = R_{ij}^\top$ for $i \neq j$, and $x(t)$ is the solution of \eqref{MPC_dynamic_system} with $x(0)=x_0$. Positive definiteness of $R_{ii}$ guarantees that the stationarity conditions can be solved uniquely for $u_i$, while $Q_i \succeq 0$ ensures that the infinite-horizon cost is bounded from below. The infinite-horizon differential game is referred to as the \textit{global game}.
\end{definition}

The unconstrained MPC approach approximates an equilibrium behavior for the global game by repeatedly solving finite-horizon \emph{local games}.

\begin{definition}[Local Game in the unconstrained MPC Solution Concept \cite{van_den_broek_moving_1999}]\label{def_local_game} \leavevmode\\
	Consider, at each time $t_0$, a finite-horizon game on $[t_0,t_0+T]$ for some fixed horizon length $T>0$. It is assumed that the information pattern is open-loop for each local game. The strategy space of player $i$, denoted $\hat{\mathcal{U}}_i$, is a subset of all maps from $[t_0,t_0+T]$ to $\mathbb{R}^{m_i}$ such that for each $(u_1,\dots,u_N)\in\hat{\mathcal{U}}_1\times\cdots\times\hat{\mathcal{U}}_N$ there exists a unique state trajectory on $[t_0,t_0+T]$ for each initial condition. Let the cost functional be
	\begin{equation}
		\hat{J}_i(\hat{u}_1,\dots,\hat{u}_N,x(t_0))
		= \frac{1}{2}\int_{t_0}^{t_0+T}\left(x(t)^\top Q_i x(t) + \sum_{j\in\mathbb{P}} u_j(t)^\top R_{ij}u_j(t)\right)\,dt,
		\quad i\in\mathbb{P},
	\end{equation}
	where $x(t)$ is the solution of \eqref{MPC_dynamic_system} on $[t_0,t_0+T]$ with initial condition $x(t_0)$. This finite-horizon game is referred to as the \textit{local game}.
\end{definition}

\textbf{Remark:} The global and local games differ in their information structure: the global game uses feedback information, whereas each local game is formulated with open-loop information. The receding-horizon implementation applies only the instantaneous value at $t = t_0$ of the local open-loop equilibrium and re-solves the local game from the current state at each subsequent time instant, yielding a closed-loop (state-dependent) control law.

For time-invariant dynamics and costs, it is sufficient to analyze a representative local game on $[0,T]$; the receding-horizon policy is obtained by repeatedly re-solving the same finite-horizon problem with updated initial state.

\begin{theorem}[Time-Invariant unconstrained MPC Solution \cite{van_den_broek_moving_1999}]\label{theo_time_invariant_unconstrained_mpc_solution} \leavevmode\\
	Let $T>0$ and let $x_0\in\mathbb{R}^n$. Let 	$u_i^*(\,\cdot\,;x_0):[0,T]\to\mathbb{R}^{m_i}$, $i\in\mathbb{P}$, be an open-loop Nash equilibrium of the local game of Definition~\ref{def_local_game} on $[0,T]$ with initial state $x_0$, and denote by
	\begin{equation}\label{eq_initial_move_map}
		u_i^{(0)*}(x_0) := u_i^*(0;x_0), \qquad
		u_i^{(0)*}:\mathbb{R}^n\to\mathbb{R}^{m_i},
	\end{equation}
	its value at the initial time of the horizon, regarded as a map from the initial state to the applied control. Then the state-feedback strategies $u_i(x(t)) = u_i^{(0)*}(x(t))$, obtained by evaluating this map at the current state,	form a \textbf{time-invariant unconstrained MPC solution} for the global game with moving horizon length $T$.
\end{theorem}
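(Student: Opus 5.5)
The claim has two parts: the receding-horizon construction is well defined and time-invariant, and the resulting strategies are admissible for the global game of Definition~\ref{def_global_game}. The phrase ``time-invariant unconstrained MPC solution'' is essentially a definition, so the real content is showing that (i) the initial-move map $u_i^{(0)*}$ does not depend on $t_0$, and (ii) the feedback $u_i(x)=u_i^{(0)*}(x)$ produces a unique state trajectory for every initial state.

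The plan has four steps.

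\textbf{Step 1: time-shift invariance.} Take the local game of Definition~\ref{def_local_game} on $[t_0,t_0+T]$ with initial state $\xi$. Substitute $s=t-t_0$. Because $A$, $B_i$, $Q_i$, $R_{ij}$ are constant, the dynamics and the costs $\hat J_i$ become exactly those of the representative game on $[0,T]$ with $x(0)=\xi$. The map $\hat u_i(\cdot)\mapsto \hat u_i(\cdot+t_0)$ is a bijection between the strategy sets that preserves all costs, so it maps open-loop Nash equilibria to open-loop Nash equilibria. Hence the equilibrium initial move at time $t_0$ from state $\xi$ equals $u_i^*(0;\xi)=u_i^{(0)*}(\xi)$, independent of $t_0$.

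\textbf{Step 2: linearity of the initial-move map.} I would invoke the standard LQ open-loop characterization \cite{engwerda_lq_2005}. The open-loop Nash equilibrium satisfies the Hamiltonian boundary value problem
\[
\dot x=Ax-\sum_j S_j\lambda_j,\qquad \dot\lambda_i=-Q_ix-A^\top\lambda_i,\qquad x(0)=x_0,\ \lambda_i(T)=0,
\]
with $u_i=-R_{ii}^{-1}B_i^\top\lambda_i$ and $S_j=B_jR_{jj}^{-1}B_j^\top$. Writing $M$ for the associated Hamiltonian matrix, the equilibrium exists uniquely for every $x_0$ iff $H(T)=[I\ 0\cdots0]\,e^{-MT}[I\ 0\cdots0]^\top$ is invertible. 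In that case $\lambda_i(0)$ is a linear function of $x_0$, so
\[
u_i^{(0)*}(x_0)=-R_{ii}^{-1}B_i^\top[\,\cdot\,]\,H(T)^{-1}x_0=:-K_i^{MPC}x_0 .
\]
This step is the main obstacle. The existence and uniqueness assumption must be made explicit, since the theorem tacitly presumes the map is single-valued. If $H(T)$ were singular, I would restrict to horizons where the equilibrium is unique; otherwise the map $u_i^{(0)*}$ is not well defined.

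\textbf{Step 3: admissibility.} With linear static gains, the closed loop $\dot x=(A-\sum_i B_iK_i^{MPC})x$ is a linear ODE with a globally Lipschitz right-hand side. It therefore has a unique solution on $[0,\infty)$ for every $x_0$, so $(u_1,\dots,u_N)$ lies in $\mathcal U_1\times\dots\times\mathcal U_N$.

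\textbf{Step 4: conclusion.} By Step 1, re-solving the local game at each $t$ from $x(t)$ and applying its instantaneous value yields exactly $u_i^{(0)*}(x(t))$. This is a time-invariant state-feedback law for the global game, as asserted. Finiteness of $J_i$ is not part of the statement, and I would note that it requires the closed-loop matrix to be stable.
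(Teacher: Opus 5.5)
Your proposal is correct. The paper gives no argument of its own and defers to Theorem 2.2 of \cite{van_den_broek_moving_1999}, whose substance is your Step 1 time-shift argument; your Steps 2--3 add the LQ well-posedness (linear gain via invertible $H(T)$, unique closed-loop trajectory) that the paper records separately in Theorem~\ref{theo_solution_to_the_unconstrained_MPC_problem}. One point you leave implicit is that the Hamiltonian boundary value problem characterizes the open-loop Nash equilibria exactly: this holds because each player's problem is convex ($Q_i\succeq 0$, $R_{ii}\succ 0$), which is also the Riccati-solvability hypothesis behind the cited Engwerda criterion based on $H(T)$.
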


\textbf{Proof:}
See the proof of Theorem 2.2 in \cite{van_den_broek_moving_1999}.

\textbf{Remark:} Theorem~\ref{theo_time_invariant_unconstrained_mpc_solution} is stated in the continuous-time limit, where the local game is re-solved at every $t$ and the instantaneous control value is applied, yielding the static gain $K^{MPC}$. This is to be distinguished from the sampled implementation used in Section~\ref{subsec_constrained_example}, where the local games are re-solved at discrete sampling instants ($\Delta t = 0.005\,\text{s}$) with a zero-order hold on the applied
input; the two coincide as $\Delta t \to 0$.

The receding-horizon construction is \emph{time consistent} in the sense of \cite{van_den_broek_moving_1999}: at any time along the closed-loop trajectory, the applied control equals the first move of the equilibrium of the newly initialized local game at the current state. Importantly, because each local problem uses open-loop information, the resulting closed-loop law does not, in general, coincide with the FNE of the global feedback game.

\subsection{Analytical Solution to the Unconstrained MPC Approach}\label{subsec_analytic_solution_to_the_unconstrianed_mpc}

The analytical solution of the local open-loop LQ game can be obtained from the associated two-point boundary value problem (TPBVP); see \cite{engwerda_lq_2005}. For the two-player case, the TPBVP is
\begin{equation}\label{eq_TPBVP_unconstrained_MPC}
	\begin{split}
		\begin{bmatrix}
			\dot{x}(t)\\
			\dot{\psi}_1(t)\\
			\dot{\psi}_2(t)
		\end{bmatrix}
		&=-
		\begin{bmatrix}
			-A & S_1 & S_2\\
			Q_1 & A^\top & 0\\
			Q_2 & 0 & A^\top
		\end{bmatrix}
		\begin{bmatrix}
			x(t)\\
			\psi_1(t)\\
			\psi_2(t)
		\end{bmatrix}
		= -M
		\begin{bmatrix}
			x(t)\\
			\psi_1(t)\\
			\psi_2(t)
		\end{bmatrix},
	\end{split}
\end{equation}
with $S_i = B_iR_{ii}^{-1}B_i^\top$, $i\in\mathbb{P}$, and
\begin{equation}
	M=
	\begin{bmatrix}
		-A & S_1 & S_2\\
		Q_1 & A^\top & 0\\
		Q_2 & 0 & A^\top
	\end{bmatrix}.
\end{equation}

A closed-form expression for the solution of the finite-horizon open-loop game can be written in terms of the matrix exponential; see \cite{engwerda_lq_2005}.%
\footnotemark
\footnotetext{The expressions from \cite{engwerda_lq_2005}, p.~340 - p.~343 are stated here for the local-game structure of \cite{van_den_broek_moving_1999}, i.e., without terminal costs.}
Define
\begin{equation}
	H(T)=
	\begin{bmatrix}
		I & 0 & 0
	\end{bmatrix}
	e^{-MT}
	\begin{bmatrix}
		I\\
		0\\
		0
	\end{bmatrix}.
\end{equation}
The two-player local game has an open-loop Nash equilibrium (OLNE) for every initial state $x_0$ if and only if $H(T)$ is invertible \cite{engwerda_lq_2005}, p. 266. Moreover, with
\begin{align}
	H(T-t) &=
	\begin{bmatrix}
		I & 0 & 0
	\end{bmatrix}
	e^{-M(T-t)}
	\begin{bmatrix}
		I\\
		0\\
		0
	\end{bmatrix},\\
	G_1(T-t) &=
	\begin{bmatrix}
		0 & I & 0
	\end{bmatrix}
	e^{-M(T-t)}
	\begin{bmatrix}
		I\\
		0\\
		0
	\end{bmatrix},\\
	G_2(T-t) &=
	\begin{bmatrix}
		0 & 0 & I
	\end{bmatrix}
	e^{-M(T-t)}
	\begin{bmatrix}
		I\\
		0\\
		0
	\end{bmatrix},
\end{align}
the state and the adjoint trajectories of the local game are given by
\begin{equation}\label{eq_state_adjoint_no_terminal}
	x(t) = H(T-t)\,H^{-1}(T)\,x_0,
	\qquad
	\psi_i(t) = G_i(T-t)\,H^{-1}(T)\,x_0,
	\qquad i\in\mathbb{P},
\end{equation}
so that the optimal open-loop controls can be written as (cf.~\cite{engwerda_lq_2005})
\begin{equation}\label{1_13}
	\begin{split}
		u_i^*(t)
		&= -R_{ii}^{-1}B_i^\top \psi_i(t)
		= -R_{ii}^{-1}B_i^\top G_i(T-t)\,H^{-1}(T)\,x_0
		= -R_{ii}^{-1}B_i^\top P_i(t,T)\,x(t),
	\end{split}
\end{equation}
where the last equality follows from \eqref{eq_state_adjoint_no_terminal}, because $H^{-1}(T-t)\,x(t) = H^{-1}(T)\,x_0$, and where
\begin{equation}\label{P_i(t, T)}
	P_i(t,T)=G_i(T-t)H^{-1}(T-t).
\end{equation}
Note that $P_i(t,T)$ maps the \emph{current} state $x(t)$ to the adjoint $\psi_i(t)$, whereas $G_i(T-t)H^{-1}(T)$ maps the \emph{initial} state $x_0$ to it. The receding-horizon feedback law uses the first move ($t=0$), for which both expressions coincide, i.e.,
\begin{equation}\label{P_i(0, T)}
	P_i(0,T)=G_i(T)H^{-1}(T).
\end{equation}

\begin{theorem}[Solution to the unconstrained MPC Problem \cite{van_den_broek_moving_1999}]\label{theo_solution_to_the_unconstrained_MPC_problem} \leavevmode\\
	Let $T>0$. Define
	\begin{equation}
		\begin{bmatrix}
			H(T)\\
			G_1(T)\\
			G_2(T)
		\end{bmatrix}
		= e^{MT}
		\begin{bmatrix}
			I\\
			0\\
			0
		\end{bmatrix},
		\quad
		M=
		\begin{bmatrix}
			-A & S_1 & S_2\\
			Q_1 & A^\top & 0\\
			Q_2 & 0 & A^\top
		\end{bmatrix},
	\end{equation}
	and $S_i=B_iR_{ii}^{-1}B_i^\top$, $i\in\mathbb{P}$. Then there exists a unique MPC solution for the unconstrained, time-invariant linear--quadratic differential game if and only if $H(T)$ is invertible. If this condition holds, the MPC feedback law is
	\begin{equation}\label{feedback_law_mpc_solution}
		u_i^{MPC}(x,T) = -R_{ii}^{-1}B_i^\top G_i(T)H^{-1}(T)\,x, \quad i=1,2.
	\end{equation}
\end{theorem}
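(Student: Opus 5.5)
The plan is to reduce the statement to two facts: the existence/uniqueness characterization of the open-loop Nash equilibrium (OLNE) of the local game on $[0,T]$, and the receding-horizon construction of Theorem~\ref{theo_time_invariant_unconstrained_mpc_solution}. By that theorem, a time-invariant MPC solution with horizon $T$ exists and is unique exactly when the initial-move map $u_i^{(0)*}:\mathbb{R}^n\to\mathbb{R}^{m_i}$ of \eqref{eq_initial_move_map} is well defined and single-valued for every $x_0\in\mathbb{R}^n$. So the whole statement follows once we show that the local game has a unique OLNE for every $x_0$ if and only if $H(T)$ is invertible, and then compute its value at $t=0$.

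\textbf{Step 1: reduce the local game to a linear TPBVP.} Fix the other player's open-loop control $u_j$. Player $i$ then faces a standard finite-horizon LQ optimal control problem. Its cost is strictly convex in $u_i$ because $R_{ii}\succ0$, and convex in the state because $Q_i\succeq 0$ and the dynamics are linear. Hence Pontryagin's conditions are necessary and sufficient for player $i$'s best response:
\[
u_i=-R_{ii}^{-1}B_i^\top\psi_i,\qquad \dot\psi_i=-Q_ix-A^\top\psi_i,\qquad \psi_i(T)=0 .
\]
Stacking both players together with the state equation gives exactly the TPBVP \eqref{eq_TPBVP_unconstrained_MPC}, with $x(0)=x_0$ and $\psi_1(T)=\psi_2(T)=0$. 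Consequently, OLNEs of the local game with initial state $x_0$ are in one-to-one correspondence with solutions of this TPBVP.

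\textbf{Step 2: solve the TPBVP.} Write $z=[x^\top,\psi_1^\top,\psi_2^\top]^\top$. Integrating $\dot z=-Mz$ backward from $t=T$ gives $z(t)=e^{M(T-t)}z(T)$ with $z(T)=[x(T)^\top,0,0]^\top$. Evaluating at $t=0$ and using the definition in the statement yields
\[
x_0=H(T)\,x(T),\qquad \psi_i(0)=G_i(T)\,x(T).
\]
The free unknown is therefore only the terminal state $x(T)$, and it must satisfy the linear equation $H(T)x(T)=x_0$.

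\emph{If $H(T)$ is invertible}, this equation has the unique solution $x(T)=H^{-1}(T)x_0$ for every $x_0$. Hence the OLNE exists and is unique, and its first move is
\[
u_i^*(0;x_0)=-R_{ii}^{-1}B_i^\top G_i(T)H^{-1}(T)\,x_0 .
\]
Substituting $x_0=x(t)$, as prescribed by Theorem~\ref{theo_time_invariant_unconstrained_mpc_solution}, gives \eqref{feedback_law_mpc_solution}.

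\emph{If $H(T)$ is singular}, choose $x_0\notin\operatorname{range}H(T)$. For this initial state no solution of the TPBVP exists, so no OLNE exists and the MPC solution is not defined at $x_0$. This is precisely the statement cited from \cite{engwerda_lq_2005}, p.~266.

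\textbf{Main obstacle: the sign convention.} The step I expect to need the most care is reconciling the earlier definition of $H$ with $e^{-MT}$ against the present one with $e^{MT}$. I would fix the convention once, by carrying out the backward integration in Step~2 explicitly, and check that it reproduces \eqref{eq_state_adjoint_no_terminal}. Apart from this, uniqueness of the MPC law follows directly from uniqueness of the TPBVP solution. The time-invariance of $A$, $B_i$, $Q_i$, $R_{ij}$ guarantees that the same gain $G_i(T)H^{-1}(T)$ applies at every re-solve instant.
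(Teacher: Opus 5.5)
Your proposal is correct and follows the route the paper relies on: the paper defers to Theorem~3.1 of \cite{van_den_broek_moving_1999} and sketches the same derivation in Section~\ref{subsec_analytic_solution_to_the_unconstrianed_mpc}, namely the convex best-response conditions giving the TPBVP, backward integration showing that solvability for every $x_0$ is equivalent to invertibility of $H(T)$, and reading off the first move. Your explicit integration of $\dot z=-Mz$ correctly produces the $e^{MT}$ of the statement. It is the paper's earlier definitions of $H(T-t)$ and $G_i(T-t)$ via $e^{-M(T-t)}$ that are sign-inconsistent with this $M$, so your check against \eqref{eq_state_adjoint_no_terminal} goes through only after flipping that sign there.
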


\textbf{Proof:}
See the proof of Theorem 3.1 in \cite{van_den_broek_moving_1999}.

The horizon length $T$ directly affects the feedback gains in \eqref{feedback_law_mpc_solution}. As $T$ increases, the corresponding open-loop trajectories generated by the local games converge to the infinite-horizon OLNE trajectories under the assumptions of \cite{van_den_broek_moving_1999}.

\begin{theorem}[Convergence of the unconstrained MPC Solution \cite{van_den_broek_moving_1999}]\label{ConvergenceOfTheMHSolution} \leavevmode\\
	Let $H(T)$ be regular for all $T>0$. Let $P_i(T)=G_i(T)H^{-1}(T)$ and suppose that $\bar{P}_i=\lim_{T\to\infty}P_i(T)$ exists for $i=1,2$. Furthermore, assume that $A-S_1\bar{P}_1-S_2\bar{P}_2$ is stable. Then the open-loop strategies corresponding to the MPC solution obtained in the limit $T\to\infty$ constitute an OLNE for the infinite-horizon game.
\end{theorem}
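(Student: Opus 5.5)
The plan is to use the standard characterization of infinite-horizon OLNE in LQ games (Engwerda, Theorem 7.x). A pair $(\bar u_1,\bar u_2)$ is an OLNE for the infinite-horizon game if, along the generated state, each control satisfies $\bar u_i=-R_{ii}^{-1}B_i^\top\psi_i$. Here $\psi_i$ must solve the adjoint equation $\dot\psi_i=-Q_ix-A^\top\psi_i$, and both $x$ and $\psi_i$ must decay to zero. Sufficiency also uses that, with the other player's control fixed, each player faces a convex regulator problem ($Q_i\succeq0$, $R_{ii}\succ0$). The candidate is built from the limit gains $\bar P_i$: set $\dot x=(A-S_1\bar P_1-S_2\bar P_2)x$, $\psi_i=\bar P_i x$ and $\bar u_i=-R_{ii}^{-1}B_i^\top\bar P_i x$. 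This is precisely the limit $T\to\infty$ of the MPC law \eqref{feedback_law_mpc_solution}, so its open-loop realization from $x_0$ is what the statement calls the open-loop strategies of the limiting MPC solution.

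\textbf{Step 1.} Show that $P_i(T)=G_i(T)H^{-1}(T)$ satisfies the coupled asymmetric Riccati differential equations in $T$. Differentiate $[H;G_1;G_2]=e^{MT}[I;0;0]$, which gives $\tfrac{d}{dT}[H;G_i]=M[H;G_i]$. Then
\[
\dot P_i=\dot G_iH^{-1}-G_iH^{-1}\dot HH^{-1}.
\]
Using the block rows of $M$ yields
\[
\dot P_i=Q_i+A^\top P_i-P_i\big(-A+S_1P_1+S_2P_2\big),
\]
up to the sign convention for $M$ fixed in Theorem~\ref{theo_solution_to_the_unconstrained_MPC_problem}. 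This uses only that $H(T)$ is regular for every $T$.

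\textbf{Step 2.} Pass to the limit and show that $\bar P_i$ solves the algebraic equation
\[
0=Q_i+A^\top\bar P_i+\bar P_iA-\bar P_i(S_1\bar P_1+S_2\bar P_2).
\]
This is the step I expect to be the main obstacle: existence of $\lim P_i(T)$ alone does not force $\dot P_i(T)\to0$. I would argue as follows. The right-hand side of the Riccati ODE is a continuous function of $(P_1,P_2)$, so $\dot P_i(T)$ converges to a limit $L_i$. If $L_i\neq0$, then $P_i(T)$ would grow linearly and could not converge, so $L_i=0$, which gives the algebraic equation.

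\textbf{Step 3.} Verify the OLNE conditions. Set $\psi_i(t)=\bar P_ix(t)$. Then
\[
\dot\psi_i=\bar P_i(A-S_1\bar P_1-S_2\bar P_2)x=-Q_ix-A^\top\psi_i
\]
by the algebraic equation, so the costate equation holds. Stability of $A-S_1\bar P_1-S_2\bar P_2$ gives $x,\psi_i\to0$ exponentially, so all costs are finite. For sufficiency, fix player $j$'s control $\bar u_j(t)$ as an exogenous open-loop signal; player $i$ then faces a convex LQ tracking-type problem. For any admissible deviation $v$ with finite cost and converging state, compute $\tfrac{d}{dt}(\psi_i^\top\delta x)$ and integrate over $[0,\infty)$; the boundary terms vanish by stability. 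Completing the square then shows the deviation cost is at least $\tfrac12\int v^\top R_{ii}v\ge0$. Hence $\bar u_i$ is a best response for each $i$, and the pair is an OLNE.
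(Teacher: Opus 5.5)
The paper gives no argument of its own for this statement; it only cites Theorem~3.7 of van den Broek. Your proof supplies the missing argument, and it is correct.

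\textbf{Steps 1 and 2.} With $[H;G_1;G_2]=e^{MT}[I;0;0]$ you correctly obtain $\dot P_i=Q_i+A^\top P_i+P_iA-P_i(S_1P_1+S_2P_2)$ with $P_i(0)=0$. Because $\dot P_i(T)$ converges whenever $P_i(T)$ does, its limit must vanish, so $\bar P_i$ solves the coupled open-loop algebraic Riccati equations and $A-S_1\bar P_1-S_2\bar P_2$ is stable. This conclusion does not depend on the paper's sign clash ($e^{-MT}$ in Section~\ref{subsec_analytic_solution_to_the_unconstrianed_mpc} versus $e^{MT}$ in Theorem~\ref{theo_solution_to_the_unconstrained_MPC_problem}). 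Flipping the sign of the ODE leaves its stationary equation unchanged.

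\textbf{Step 3 and how it differs from the cited route.} The real difference lies in your last step. In the Engwerda framework that van den Broek works in, the usual way to finish is to invoke Engwerda's existence theorem for infinite-horizon open-loop Nash equilibria. That theorem allows indefinite $Q_i$, so it additionally needs stabilizing solutions $K_i$ of the individual equations $Q_i+A^\top K_i+K_iA-K_iS_iK_i=0$ to make each player's subproblem well posed. You instead use $Q_i\succeq0$ and $R_{ii}\succ0$, which the paper's global game assumes. With these, the costate conditions are directly sufficient, because the cost difference equals $\tfrac12\int_0^\infty(\delta x^\top Q_i\delta x+v^\top R_{ii}v)\,dt\ge0$. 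Your route therefore needs no hypothesis beyond those in the statement. The citation route covers indefinite weights, at the price of that extra condition.

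\textbf{Details to state explicitly.}
\begin{itemize}
\item The paper's global-game definition never fixes the admissible class for the infinite-horizon open-loop game. You are using Engwerda's stabilizing class (finite cost and $x(t)\to0$), and that is what makes the boundary term $\psi_i(\tau)^\top\delta x(\tau)$ vanish.
\item Finiteness of $J_i$, together with $Q_i\succeq0$ and $R_{ii}\succ0$, puts $v$ and $Q_i^{1/2}\delta x$ in $L_2$. Hence the cross terms $\int(\bar x^\top Q_i\delta x+\bar u_i^\top R_{ii}v)\,dt$ converge absolutely by Cauchy--Schwarz, which justifies splitting the cost difference.
\item What you call completing the square is just expanding the quadratic.
\end{itemize}

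\textbf{Optional strengthening.} If you want to admit all finite-cost deviations, the boundary term still vanishes, but this takes one more observation. In each local game, the costate component along the unobservable subspace of $(Q_i^{1/2},A)$ obeys a homogeneous equation with zero terminal value. Hence $P_i(T)^\top$, and therefore $\bar P_i^\top$, annihilates that subspace. The component of $\delta x$ modulo that subspace then tends to zero by a standard observer argument.
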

\textbf{Proof:}
See the proof of Theorem 3.7 in \cite{van_den_broek_moving_1999}.

Although the receding-horizon implementation produces a state-feedback law, it is generated from open-loop local equilibria and therefore does not generally equal the FNE of the global feedback game. In particular, for large $T$, the limiting behavior described above is tied to the infinite-horizon \emph{open-loop} equilibrium, which in general differs from the FNE. This motivates enriching the local problems with additional design degrees of freedom and, crucially for practical settings, incorporating state and input constraints.

\subsection{Approximation of Feedback Nash Equilibria with an Auxiliary Game and Terminal Cost}\label{sec_Auxiliary_Game_with_Terminal_Cost}

The baseline approach of \cite{van_den_broek_moving_1999} is analytically tractable in the unconstrained case and proposes selecting $T$ to reduce the mismatch between the MPC-induced feedback and the global-game feedback optimality conditions. However, in multidimensional settings and in the presence of coupling terms in the cost, optimizing only the horizon length can be too restrictive, as demonstrated in Section \ref{sec:numerical_examples}.

In this section, we consider an extension that augments each local game with terminal
penalties and explicitly incorporates cross terms $R_{ij}$ in the performance index.
Terminal penalties are standard in MPC and have also been analyzed in differential-game MPC
variants \cite{benenati_linear-quadratic_2024}, where they serve their classical role as a
stability and feasibility ingredient for the receding-horizon \emph{open-loop} equilibrium.
Here, in contrast, $Q_{1T}$ and $Q_{2T}$ are free design parameters tuned jointly with the
horizon $T$ to match the optimality conditions of the infinite-horizon \emph{feedback} game,
which, together with explicit state and input constraints, is to the best of our knowledge
not treated systematically in existing work.

The resulting method introduces additional degrees of freedom beyond the horizon length and yields a constrained MPC scheme intended to approximate constrained FNE trajectories via repeated solution of constrained open-loop local games.

The global game is as in Definition \ref{def_global_game}. We now extend the local game in Definition \ref{def_local_game} by adding terminal costs; this extended local problem is referred to as a \textit{local auxiliary game}.

\begin{definition}[Local Auxiliary Game in the LQ MPC Solution Concept]\label{def_local_auxiliary_game_unconstrained} \leavevmode\\
	Consider at each point in time $t_0$ a finite-horizon game on $[t_0,t_0+T]$ for some fixed horizon length $T$, with an open-loop information pattern and terminal-cost parameters $Q_{iT}$ (in contrast to Definition \ref{def_local_game}). The strategy space of player $i$, noted as $\hat{\mathcal{U}}_i$, is defined as the subset of all maps from $[t_0,t_0+T]$ to $\mathbb{R}^{m_i}$ such that for each $(u_1,\dots,u_N)\in\hat{\mathcal{U}}_1\times\cdots\times\hat{\mathcal{U}}_N$ there exists a unique state trajectory on $[t_0,t_0+T]$ for each initial state. Let the cost functional be
	\begin{equation}
		\begin{split}
			\hat{J}_i(\hat{u}_1,\dots,\hat{u}_N,x(t_0))
			&= \frac{1}{2}\int_{t_0}^{t_0+T}\left(x(t)^\top Q_i x(t) + \sum_{j\in\mathbb{P}} u_j(t)^\top R_{ij}u_j(t)\right)\,dt
			+ \frac{1}{2}x^\top(t_0+T)Q_{iT}x(t_0+T), \quad i\in\mathbb{P}.
		\end{split}
	\end{equation}
	where the terminal penalties satisfy $Q_{iT} = Q_{iT}^\top \succeq 0$ for each player $i$, and $x(t)$ is the solution of \eqref{MPC_dynamic_system} on $[t_0,t_0+T]$ with initial condition $x(t_0)$. This game is referred to as the \textit{local auxiliary game}.

\end{definition}

The terminal penalties modify the finite-horizon open-loop solution and induce a dependence of the local Riccati solutions on $(T,Q_{1T},Q_{2T})$. In particular, the state and adjoint trajectories can be expressed as
\begin{align}
	x(t) &= H(T-t,Q_{1T},Q_{2T})\,H^{-1}(T,Q_{1T},Q_{2T})\,x_0,\\
	\psi_1(t) &= G_1(T-t,Q_{1T},Q_{2T})\,H^{-1}(T,Q_{1T},Q_{2T})\,x_0,\\
	\psi_2(t) &= G_2(T-t,Q_{1T},Q_{2T})\,H^{-1}(T,Q_{1T},Q_{2T})\,x_0,
\end{align}
with
\begin{align}
	H(T-t,Q_{1T},Q_{2T}) &= \begin{bmatrix}\label{appendix_H_matrix}
		I & 0 & 0
	\end{bmatrix}e^{-M(T-t)}
	\begin{bmatrix}
		I\\
		Q_{1T}\\
		Q_{2T}
	\end{bmatrix},\\
	G_1(T-t,Q_{1T},Q_{2T}) &=
	\begin{bmatrix}
		0 & I & 0
	\end{bmatrix}e^{-M(T-t)}
	\begin{bmatrix}
		I\\
		Q_{1T}\\
		Q_{2T}
	\end{bmatrix},\\
	G_2(T-t,Q_{1T},Q_{2T}) &=
	\begin{bmatrix}
		0 & 0 & I
	\end{bmatrix}e^{-M(T-t)}
	\begin{bmatrix}
		I\\
		Q_{1T}\\
		Q_{2T}
	\end{bmatrix}.
\end{align}

\textbf{Remark:} For $T>0$, the local auxiliary game admits a unique open-loop Nash equilibrium for every initial state if and only if $H(T,Q_{1T},Q_{2T})$ is invertible; the regularity condition of Theorem~\ref{theo_solution_to_the_unconstrained_MPC_problem} thus carries over unchanged, i.e.\ the terminal-penalty degrees of freedom do not weaken it.

The corresponding local-game Riccati solutions can be written as
\begin{equation}\label{eq_P_i(T, Q_{1T}, Q_{2T})}
	\begin{split}
		P_i(T,Q_{1T},Q_{2T}) = G_i(T,Q_{1T},Q_{2T})\,H^{-1}(T,Q_{1T},Q_{2T}).
	\end{split}
\end{equation}

In contrast to \eqref{P_i(t, T)}, where the first argument of $P_i$ denotes the current time, we write $P_i$ here directly for the value used by the receding-horizon law, i.e.\ at $t=0$; its arguments are therefore the horizon length and the terminal penalties.

Using \eqref{eq_P_i(T, Q_{1T}, Q_{2T})}, we define an optimization problem that tunes both the horizon and the terminal penalties. The solution of the local open-loop auxiliary game is denoted by the subscript $\textit{LG}$ in the following. Substituting $P_i^{LG}(T,Q_{1T},Q_{2T})$ into the coupled algebraic Riccati equations of the infinite-horizon feedback game yields a residual that is generally nonzero; we therefore choose $(T,Q_{1T},Q_{2T})$ to minimize this residual. In contrast to \cite{van_den_broek_moving_1999}, the cross terms $R_{ij}$ are included in the residual through the matrices $S_{ij}$ below.

\begin{definition}[Optimization Problem with Prediction Horizon and Terminal Cost]\label{def_NovelOptimizationApproach} \leavevmode\\
	Let $P_i^{LG}(T,Q_{1T},Q_{2T})$ be the solution to the local game in Definition \ref{def_local_auxiliary_game_unconstrained} on the time horizon $[0,T]$ and $\varphi_i$ the residual matrix obtained by substituting $P_i^{LG}$ into the coupled algebraic Riccati equation for player $i$:
	\begin{equation}
		\begin{split}
			\varphi_i
			&= -Q_i - P_i^{LG}A - A^\top P_i^{LG}
			+ \sum_{j=1}^{2} P_i^{LG} S_j P_j^{LG}
			+ \sum_{\substack{j=1\\ j\neq i}}^{2} P_j^{LG} S_j P_i^{LG}
			- \sum_{\substack{j=1\\ j\neq i}}^{2} P_j^{LG} S_{ij} P_j^{LG},
			\quad i\in\mathbb{P},
		\end{split}
	\end{equation}
	where
	\begin{align}
		S_j &= B_jR_{jj}^{-1}B_j^\top, \quad j\in\mathbb{P},\\
		S_{ij} &= B_jR_{jj}^{-1}R_{ij}R_{jj}^{-1}B_j^\top, \quad i,j\in\mathbb{P},\ i\neq j,
	\end{align}
	and $\|\varphi_i\|$ denotes the Frobenius norm of $\varphi_i$. The optimization problem is defined as
	\begin{equation}\label{eq_auxiliary_optimization_problem}
		\begin{split}
			T^*,\,Q_{1T}^*,\,Q_{2T}^*
			&= \underset{T,\,Q_{1T},\,Q_{2T}}{\text{argmin}}
			\left(\sum_{i=1}^{2}\|\varphi_i\|\right)\\
			\text{subject to}\quad
			& T \in [T_{\min},\,T_{\max}],\quad 0<T_{\min}\leq T_{\max}<\infty,\\
			& Q_{iT} = Q_{iT}^\top \succeq 0, \quad i\in\mathbb{P},\\
			& H(T,Q_{1T},Q_{2T}) \ \text{invertible}.
		\end{split}
	\end{equation}
\end{definition}

\textbf{Remark:} The constraints in \eqref{eq_auxiliary_optimization_problem} restrict the search to the parameters for which the local auxiliary game of Definition \ref{def_local_auxiliary_game_unconstrained} is well posed: the terminal penalties are symmetric positive semidefinite as required there, and the horizon is confined to a compact interval on which the regularity condition $\det H(T,Q_{1T},Q_{2T})\neq 0$ holds, so that $P_i^{LG}$ in \eqref{eq_P_i(T, Q_{1T}, Q_{2T})} is defined.

\textbf{Remark:} The objective in \eqref{eq_auxiliary_optimization_problem} penalizes only the residual of the coupled algebraic Riccati equations and contains no explicit stability requirement. Consequently, the closed-loop matrix $A^{MPC}$ induced by $(T^*,Q_{1T}^*,Q_{2T}^*)$ is not guaranteed to be Hurwitz and has to be verified a posteriori for the returned minimizer. This verification is also a prerequisite for assumption 2 of Theorem \ref{Theorem_proof_upper_bound}; whenever it fails, the error bound derived there does not apply. The problem is furthermore nonconvex, so that only local minimizers are obtained in general; this is discussed further in Section \ref{sec:numerical_examples}.

\subsection{Problem with State Constraints}\label{sec_auxiliary_game_constrained_description}

We now extend the auxiliary-game approach to incorporate state and input constraints. The goal is to obtain an approximation of \emph{constrained} infinite-horizon FNE trajectories by solving a constrained MPC problem, i.e., by repeatedly solving constrained local open-loop games with terminal penalties.

\begin{definition}[Global Game with State Constraints in the MPC Solution Concept]\label{GlobalGame_constrained} \leavevmode\\
	Let the state dynamics of an infinite-horizon LQ differential game with feedback information structure and state constraints be
	\begin{equation}\label{MPC_dynamic_system_constrained}
		\begin{split}
			\dot{x}(t) &= Ax(t) + \sum_{i\in\mathbb{P}} B_i u_i(t,x(t)),
			\quad x(t)\in\mathcal{X},\quad u_i(t,x(t))\in U_i,\quad i=1,\dots,N.
		\end{split}
	\end{equation}
	with permissible state and control input sets $\mathcal{X}$ and $U_i$.

	The strategy space of player $i$, denoted $\mathcal{U}_i$, is defined as the subset of all maps from $[0,\infty)\times\mathcal{X}$ to $U_i$ such that for each $(u_1,\dots,u_N)\in\mathcal{U}_1\times\cdots\times\mathcal{U}_N$ there exists a unique state trajectory on $[0,\infty)$ for each initial state. Let the cost functional be
	\begin{equation}
		J_i(u_1,\dots,u_N,x_0)
		= \frac{1}{2}\int_0^\infty \left(x(t)^\top Q_i x(t) + \sum_{j\in\mathbb{P}} u_j(t,x(t))^\top R_{ij}u_j(t,x(t))\right)\,dt,
		\quad i\in\mathbb{P},
	\end{equation}
	where $x(t)$ is the solution of \eqref{MPC_dynamic_system_constrained} with $x(0)=x_0$. The resulting infinite-horizon differential game is referred to as the \textit{constrained global game}.
\end{definition}

\begin{definition}[Local Auxiliary Game with Constraints in the MPC Solution Concept]\label{LocalGame_constrained} \leavevmode\\
	Consider at each time $t_0$ a finite-horizon game on $[t_0,t_0+T]$ with fixed horizon length $T>0$. It is assumed that the information pattern is open-loop for each local game. The strategy space of player $i$, noted as $\hat{\mathcal{U}}_i$, is defined as the subset of all maps from $[t_0,t_0+T]$ to $U_i$ such that for each $(u_1,\dots,u_N)\in\hat{\mathcal{U}}_1\times\cdots\times\hat{\mathcal{U}}_N$ there exists a unique state trajectory $x(t)$ satisfying the system dynamics and the constraints $x(t)\in\mathcal{X}$ on $[t_0,t_0+T]$ for each initial state. Let the cost functional be
	\begin{align} \nonumber
		\hat{J}_i(\hat{u}_1,\dots,\hat{u}_N,x(t_0))
		= &\frac{1}{2}\int_{t_0}^{t_0+T}\left(x(t)^\top Q_i x(t) + \sum_{j\in\mathbb{P}} u_j(t)^\top R_{ij}u_j(t)\right)\,dt \\
		&+ \frac{1}{2}x^\top(t_0+T)Q_{iT}x(t_0+T),\quad i\in\mathbb{P},
	\end{align}
	where $x(t)$ is the solution of \eqref{MPC_dynamic_system_constrained} on $[t_0,t_0+T]$ with initial condition $x(t_0)$, subject to $x(t)\in\mathcal{X}$ and $u_i(t)\in U_i$, and the terminal penalties $Q_{iT}$ are as in Definition~\ref{def_local_auxiliary_game_unconstrained}. This game is referred to as the \textit{constrained local auxiliary game}.

\end{definition}
The constrained MPC trajectory is generated by repeatedly solving the constrained local auxiliary game and applying the first control action, thereby enforcing $x(t)\in\mathcal{X}$ and $u_i(t)\in U_i$ by construction. Computing the constrained local open-loop equilibria efficiently is therefore central. Existing approaches include variational-inequality formulations for constrained open-loop games \cite{benenati_linear-quadratic_2024} and iterated best-response methods \cite{williams_best_2018}. Practical applications of such MPC schemes can be found in the literature on multi-agent control and multi-robot systems, however, with strong assumptions on the structure of the game (e.g., decoupled costs) or without systematic analysis of the approximation quality of the resulting trajectories to constrained infinite-horizon FNE trajectories, which is handled in the next section.

\section{Upper Bounds for the Trajectory Error Between FNE and MPC Solutions} \label{sec:error_estimation}

In the previous section, a method for approximating FNE based on an MPC approach was introduced. It remains an open question whether the maximum error between state trajectories can be estimated using only the approximated solution and the reference solution from the coupled Riccati equation. This approach would eliminate the need to perform a simulation-based, a posteriori analysis for every individual scenario. The following section aims at deriving an upper bound for the trajectory error estimates of the MPC and FNE solution for the unconstrained case.

The literature on trajectory-error estimation in differential games highlights three relevant directions: Lyapunov-based bounds, matrix-exponential norm estimates, and $\varepsilon$-Nash concepts.
A dynamic neural-network identification method with Lyapunov analysis for $N$-player nonzero-sum games with unknown nonlinear dynamics is developed in \cite{johnson_approximate_2015}, obtaining bounds primarily on input trajectory errors. 
In \cite{frihauf_nash_2012}, extremum seeking and Lyapunov arguments are used to ensure bounded input deviations, but explicit bounds for state errors are not provided. 
In \cite{cappello_approximate_2020}, $\varepsilon$-Nash equilibria are analyzed, which guarantee that no player can improve their cost by more than $\varepsilon$ through unilateral deviation, but this does not directly translate to explicit state trajectory error bounds.
In contrast, in \cite{varga_upper_2024}, an explicit upper bound on state-trajectory deviations is derived via a norm-based distance metric between games, which is therefore most aligned with estimating maximum state error in our setting.

Following \cite{varga_upper_2024}, we use matrix-exponential bounds to estimate trajectory deviations in infinite-horizon LQ feedback games. The resulting closed-loop dynamics, governed by Riccati-based FNE gains, are:
\begin{equation}\label{err_approx_system_dynamics_fne}
\dot{x}(t)=A^{FNE}x(t),
\end{equation}
with
\begin{equation}
A^{FNE}=A-\sum_{i\in\mathbb{P}}B_i R_{ii}^{-1}B_i^\top P_i = A+\sum_{i\in\mathbb{P}}B_iK_i^{FNE}, \qquad K_i^{FNE}:=-R_{ii}^{-1}B_i^\top P_i,
\end{equation}
with $u_i^{FNE}(t)=K_i^{FNE}x^{FNE}(t)$ the FNE feedback law, and the corresponding trajectory is
\begin{equation}
x^{FNE}(t)=e^{A^{FNE}t}x_0.
\end{equation}
The solution of the MPC approach is based on the optimization problem from Definition \ref{def_NovelOptimizationApproach}, where the control law is defined as:
\begin{equation}
	u_i^{MPC}(t) = K_i^{MPC} x(t)
\end{equation}
The feedback law leads to closed system matrices
\begin{equation} \label{err_approx_system_dynamics_mpc}
	\dot{x}^{MPC}(t) = A^{MPC} x^{MPC}(t),
\end{equation}
where $A^{MPC}=A+\sum_{i\in\mathbb{P}}B_iK_i^{MPC}$ is the closed-loop matrix obtained by substituting $u_i^{MPC}(t)=K_i^{MPC}x(t)$ into \eqref{MPC_dynamic_system}. The solution of \eqref{err_approx_system_dynamics_mpc} is
\begin{equation}
	x^{MPC}(t) = e^{A^{MPC} t} x_0 
\end{equation}

\begin{theorem}[Upper Bound for the Trajectory Error in MPC-based Feedback Nash Games] \label{Theorem_proof_upper_bound}
	Let an LQ game and an LQ local game be given as defined in Definitions \ref{2_DiffaGame} and \ref{def_local_auxiliary_game_unconstrained}. 
	Let $x^{MPC}(t)$ denote the state trajectory of the game based on the auxiliary game solution, and $x^{FNE}(t)$ represent the state trajectory corresponding to the FNE of the game. Assume the following conditions:
	\begin{enumerate}
		\item The initial conditions are identical, i.e., $x^{MPC}(0) = x^{FNE}(0) = x_0$.
		\item The system matrices $A^{MPC}$ and $A^{FNE}$ are Hurwitz matrices, i.e., their spectra lie in $\{z \in \mathbb{C} : \operatorname{Re}(z)~<~0\}$.
		\item The constants $M^{MPC}, M^{FNE} > 0$ and $\lambda^{MPC}, \lambda^{FNE} > 0$ satisfy $\|e^{A^{MPC} t}\| \leq M^{MPC} e^{-\lambda^{MPC} t}$ and $\|e^{A^{FNE} t}\| \leq M^{FNE} e^{-\lambda^{FNE} t}$ for $t \geq 0$.
		\item The control gain matrices differ by $\Delta K_i = K^{MPC}_i - K^{FNE}_{i}$ for each player $i \in \mathbb{P}$.
	\end{enumerate}

	Then, the error $e(t) = x^{MPC}(t) - x^{FNE}(t)$ between the state trajectories is bounded for all $t \geq 0$ by
	\begin{equation}
		\|e(t)\| \leq M^{MPC} M^{FNE} \|x_0\| \left(\sum_{i \in \mathbb{P}} \|B_i \Delta K_i\|\right) \cdot
		\begin{cases}
			\dfrac{e^{-\lambda^{FNE} t} - e^{-\lambda^{MPC} t}}{\lambda^{MPC} - \lambda^{FNE}}, & \lambda^{MPC} \neq \lambda^{FNE},\\[2mm]
			t\,e^{-\lambda^{MPC} t}, & \lambda^{MPC} = \lambda^{FNE}.
		\end{cases}
	\end{equation}
\end{theorem}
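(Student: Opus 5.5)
The plan is to derive a linear inhomogeneous ODE for the error and then bound its variation-of-constants representation using the exponential estimates of assumption 3.

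Subtract \eqref{err_approx_system_dynamics_fne} from \eqref{err_approx_system_dynamics_mpc}. Since $A^{MPC}-A^{FNE}=\sum_{i\in\mathbb{P}}B_i(K_i^{MPC}-K_i^{FNE})=\sum_{i\in\mathbb{P}}B_i\Delta K_i=:\Delta A$, we can write
\begin{equation*}
\dot e(t)=A^{MPC}x^{MPC}(t)-A^{FNE}x^{FNE}(t)=A^{MPC}e(t)+\Delta A\,x^{FNE}(t).
\end{equation*}
By assumption 1 we have $e(0)=0$. Variation of constants then gives
\begin{equation*}
e(t)=\int_0^t e^{A^{MPC}(t-s)}\,\Delta A\,e^{A^{FNE}s}x_0\,ds.
\end{equation*}
One can confirm this directly by differentiating $e^{-A^{MPC}t}e(t)$.

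Next, take norms inside the integral and use submultiplicativity. The triangle inequality gives $\|\Delta A\|\le\sum_{i}\|B_i\Delta K_i\|$. Assumption 3 bounds both exponentials, which yields
\begin{equation*}
\|e(t)\|\le M^{MPC}M^{FNE}\|x_0\|\Big(\sum_{i\in\mathbb{P}}\|B_i\Delta K_i\|\Big)\int_0^t e^{-\lambda^{MPC}(t-s)}e^{-\lambda^{FNE}s}\,ds.
\end{equation*}
Assumption 2 is only needed to guarantee that such constants $M,\lambda>0$ exist; it plays no further role in the argument.

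It remains to evaluate the scalar integral. Factor it as $e^{-\lambda^{MPC}t}\int_0^t e^{(\lambda^{MPC}-\lambda^{FNE})s}\,ds$. If $\lambda^{MPC}\neq\lambda^{FNE}$, it equals $\big(e^{-\lambda^{FNE}t}-e^{-\lambda^{MPC}t}\big)/(\lambda^{MPC}-\lambda^{FNE})$, which is nonnegative in either ordering of the rates. If the rates coincide, the integrand is constant and the integral equals $t\,e^{-\lambda^{MPC}t}$. These two values reproduce exactly the two cases of the stated bound.

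The main point of care is the norm convention. The norms on $e^{A t}$ in assumption 3 must be induced operator norms consistent with the vector norm, so that $\|e^{A^{MPC}(t-s)}\Delta A\,e^{A^{FNE}s}x_0\|$ splits into the product of the individual norms. If $\|B_i\Delta K_i\|$ is instead meant in the Frobenius norm, the bound still holds, since the Frobenius norm dominates the spectral norm. Everything else is routine.
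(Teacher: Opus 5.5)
Your proposal is correct and follows the paper's proof essentially step for step. Both arguments derive the error ODE $\dot e = A^{MPC}e + \sum_i B_i\Delta K_i\,x^{FNE}$, apply variation of constants with $e(0)=0$, use submultiplicative norm bounds from assumption~3, and factor the integral as $e^{-\lambda^{MPC}t}\int_0^t e^{(\lambda^{MPC}-\lambda^{FNE})s}\,ds$ to obtain both cases. As a minor refinement, the norms need not be induced operator norms: any matrix norm consistent with the Euclidean vector norm, such as the Frobenius norm, already justifies splitting the product.
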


\textbf{Remark:} Both closed-loop systems \eqref{err_approx_system_dynamics_fne} and \eqref{err_approx_system_dynamics_mpc} are time invariant, so the choice $t_0 = 0$ in assumption 1 is without loss of generality. For an arbitrary initial time $t_0$ with $x^{MPC}(t_0) = x^{FNE}(t_0) = x_0$, the statement holds on $[t_0,t_1]$ with $t$ replaced by $t-t_0$ on both sides of the inequality.

\textbf{Proof:}
The error dynamics are determined by the following equation from the differential equations in \eqref{err_approx_system_dynamics_fne} and \eqref{err_approx_system_dynamics_mpc}:

\begin{align} \label{err_approx_error_dynamics_base}
	\dot{e}(t) = \dot{x}^{MPC}(t) - \dot{x}^{FNE}(t) = A^{MPC} x^{MPC}(t) - \left( A x^{FNE}(t) + \sum_{i \in \mathbb{P}} B_i u_i^{FNE}(t) \right)
\end{align}

The error dynamics are next expressed as a function of the error $e(t)$ itself. Substituting $u_i^{FNE}(t)=K_i^{FNE}x^{FNE}(t)$ and $A^{FNE}=A+\sum_{i\in\mathbb{P}}B_iK_i^{FNE}$ into \eqref{err_approx_error_dynamics_base} gives

\begin{align} \label{error_approx_2_11}
	\dot{e}(t) = A^{MPC} x^{MPC}(t) - A^{FNE} x^{FNE}(t).
\end{align}

Adding and subtracting $A^{MPC}x^{FNE}(t)$ and using $A^{MPC}=A+\sum_{i\in\mathbb{P}}B_iK_i^{MPC}$ yields

\begin{align} \label{error_approx_2_12}
	\dot{e}(t) &= A^{MPC}\big(x^{MPC}(t) - x^{FNE}(t)\big) + \big(A^{MPC}-A^{FNE}\big) x^{FNE}(t) \nonumber \\
	&= A^{MPC} e(t) + \sum_{i \in \mathbb{P}} B_i \left( K_i^{MPC} - K_i^{FNE} \right) x^{FNE}(t).
\end{align}

With $\Delta K_i = K_i^{MPC}-K_i^{FNE}$ as in the theorem statement, \eqref{error_approx_2_12} becomes

\begin{align} \label{error_approx_2_13}
	\dot{e}(t) &= A^{MPC} e(t) + \sum_{i \in \mathbb{P}} B_i \Delta K_i x^{FNE}(t)
\end{align}

Equation \eqref{error_approx_2_13} is a linear inhomogeneous differential equation in $e(t)$, driven by the known FNE trajectory $x^{FNE}(t)$, with $B_i\in\mathbb{R}^{n\times m_i}$ and $\Delta K_i\in\mathbb{R}^{m_i\times n}$. Its solution is determined by the method of variation of constants (see \cite{mironchenko_input--state_2023}):

\begin{align}
	e(t) &= e^{A^{MPC}t} e(0)
	+ \int_0^t e^{A^{MPC}(t- \tau)} \sum_{i \in \mathbb{P}} \left( B_i \Delta K_i x^{FNE}(\tau)\right) d\tau 
\end{align}

To estimate the norm of the error, the triangle inequality is applied. This allows for the separate consideration of the norm of the first and second terms, where the first term corresponds to the homogeneous solution and the second term to the particular solution of the differential equation

\begin{align} \label{error_approx_2_15}
	\|e(t)\| &\leq \|e^{A^{MPC}t} e(0)\| + \int_0^t \|e^{A^{MPC}(t-\tau)}\| \cdot \sum_{i \in \mathbb{P}} \|B_i \Delta K_i\| \cdot \|e^{A^{FNE}\tau} \| \cdot \|x_0\| \, d\tau.
\end{align}

Assuming identical initial conditions for the FNE and MPC trajectories $x^{FNE}(t_0) = x^{MPC}(t_0) = x_0$ and thus $e(0) = x^{MPC}(0) - x^{FNE}(0) = 0 $, \eqref{error_approx_2_15} simplifies to

\begin{align} \label{error_approx_2_16}
	\|e(t)\| &\leq \int_0^t \|e^{A^{MPC}(t-\tau)}\| \cdot \sum_{i \in \mathbb{P}} \|B_i \Delta K_i\| \cdot \|e^{A^{FNE}\tau}\| \cdot \|x_0\| \, d\tau.
\end{align}

The norm of the matrix exponential functions in \eqref{error_approx_2_16} can be bounded by the inequality

\begin{align} \label{error_approx_upper_bound_expm}
	\| e^{At} \| &\leq M e^{- \lambda t}
\end{align}
where $M \in \mathbb{R}^+$ and $\lambda \in \mathbb{R}^+$ \cite{mironchenko_input--state_2023}.
With \eqref{error_approx_upper_bound_expm}, the upper bound on the error is estimated as

\begin{align}
	\|e(t)\| &\leq  \int_0^t M^{MPC} e^{-\lambda^{MPC}(t-\tau)} \cdot \sum_{i \in \mathbb{P}} \|B_i \Delta K_i\| \cdot M^{FNE} e^{-\lambda^{FNE}\tau} \|x_0\| \, d\tau.
\end{align}

To simplify the mathematical expression, terms independent of the integration variable are factored out of the integral.
\begin{align} \label{error_approx_2_19}
	\|e(t)\| &\leq M^{MPC} M^{FNE} e^{-\lambda^{MPC} t } \cdot \sum_{i \in \mathbb{P}} \|B_i \Delta K_i\| \cdot \|x_0\| \ \cdot  \int_0^t   e^{-(\lambda^{FNE} -\lambda^{MPC})\tau} d\tau.
\end{align}

For $\lambda^{MPC} \neq \lambda^{FNE}$, the integral in \eqref{error_approx_2_19} evaluates to $\big(1-e^{-(\lambda^{FNE}-\lambda^{MPC})t}\big)/(\lambda^{FNE}-\lambda^{MPC})$; this quantity is nonnegative for every $t\geq0$, regardless of the sign of $\lambda^{FNE}-\lambda^{MPC}$, since it equals the integral of the nonnegative integrand $e^{-(\lambda^{FNE}-\lambda^{MPC})\tau}$ over $[0,t]$ and must therefore \emph{not} be replaced by its absolute value. Multiplying by the leading $e^{-\lambda^{MPC}t}$ factor and simplifying $e^{-\lambda^{MPC}t}e^{-(\lambda^{FNE}-\lambda^{MPC})t}=e^{-\lambda^{FNE}t}$ yields

\begin{align} \label{error_approx_2_20}
	\|e(t)\| &\leq M^{MPC} M^{FNE} \|x_0\| \left(\sum_{i \in \mathbb{P}} \|B_i \Delta K_i\|\right) \cdot \frac{e^{-\lambda^{FNE} t} - e^{-\lambda^{MPC} t}}{\lambda^{MPC} - \lambda^{FNE}}.
\end{align}
For $\lambda^{MPC} = \lambda^{FNE}$, the integral in \eqref{error_approx_2_19} evaluates to $t$ instead, giving
\begin{align} \label{error_approx_2_20b}
	\|e(t)\| &\leq M^{MPC} M^{FNE} \|x_0\| \left(\sum_{i \in \mathbb{P}} \|B_i \Delta K_i\|\right) \cdot t\, e^{-\lambda^{MPC} t}. \qed
\end{align}
\textbf{Note}: The estimation of the matrix exponential function is further discussed in the Appendix.

The upper bound \eqref{error_approx_2_20} provides a theoretical guarantee on the maximum deviation between the state trajectories of the MPC-based solution and the FNE solution, based on the system properties and the difference in control gains. It highlights how the error depends on the stability properties of both systems (through $M$ and $\lambda$) and the gain mismatch $\Delta K$.

\section{Numerical Examples} \label{sec:numerical_examples}
In order to demonstrate the effectiveness of the proposed method for approximating infinite-horizon FNE using an MPC-based approach, three numerical examples are presented in this section. The first example focuses on a system with two state variables and asymmetric cost functions, the second example extends the analysis to a more complex system with four state variables, and the third example illustrates the constrained formulation of Definition~\ref{LocalGame_constrained}. The results of the simulations are compared to the FNE solution obtained from the coupled algebraic Riccati equations and to the solution according to \cite{van_den_broek_moving_1999}, which serves as a baseline for comparison. The performance of the proposed method is evaluated based on the maximum and the accumulated deviation from the FNE solution.

Throughout this section, the deviation of a simulated state trajectory $x(t)$ from the FNE trajectory $x^{FNE}(t)$ is quantified by two scalar measures. The maximum error
\begin{equation}\label{eq_def_eps_max}
	\varepsilon_{\text{max}} := \max_{t \in [0,t_f]} \left\| x(t) - x^{FNE}(t) \right\|
\end{equation}
denotes the peak deviation over the simulation interval $[0,t_f]$, and the accumulated error
\begin{equation}\label{eq_def_eps_sum}
	\varepsilon_{\text{sum}} := \int_0^{t_f} \left\| x(t) - x^{FNE}(t) \right\| \, dt
\end{equation}
denotes the deviation accumulated over the entire trajectory, where $\|\cdot\|$ is the Euclidean norm of the full state-error vector. The superscripts $(\cdot)^{MPC}$ and $(\cdot)^{Van}$ indicate whether $x(t)$ is the trajectory of the proposed auxiliary-game MPC approach or of the method from \cite{van_den_broek_moving_1999}.

\subsection{2 Degree-of-Freedom Example}\label{subsubsec_example_2}
    A system with two state variables is analyzed, in which also $R_{ij}$ in the cost functional of the first player are considered, providing higher weight to the interaction between the players' inputs.
    
    \begin{align}
        \dot{x} &= \begin{bmatrix} -1 & 2 \\ 0 & -1.5 \end{bmatrix} x + \begin{bmatrix} 0.5 \\ 0 \end{bmatrix} u_1 + \begin{bmatrix} 1 \\ 1 \end{bmatrix} u_2. 
    \end{align}

    The penalties for the state variables are not symmetrical and in the first player's cost function, the cross-term $R_{12}$ has a larger value than $R_{11}$.
    \begin{align}
        J_1 &= \int^{T}_0 x^\top Q_1 x + u_1^2 + u_2^\top R_{12} u_2  \, dt \\
        J_2 &= \int^{T}_0 x^\top Q_2 x + u_2^2 \, dt 
    \end{align}   
    where the parameters are
    \begin{equation}
        Q_1 = \begin{bmatrix} 4 & 0 \\ 0 & 1  \end{bmatrix}, \quad Q_2 = \begin{bmatrix} 3.21 & 0 \\ 0 & 0.64  \end{bmatrix}, \quad R_{11} = R_{22} = 1, \quad R_{12} = 4.2.
    \end{equation}
    Figure \ref{fig:problem_03_state} illustrates the trajectories of the system states that show the difference between the FNE solution, the solution according to \cite{van_den_broek_moving_1999} and the optimized MPC solution with terminal costs. Figure \ref{fig:problem_03_input} shows the control inputs of the two players.
    
    \begin{figure}[!t]
        \centering
        \includegraphics{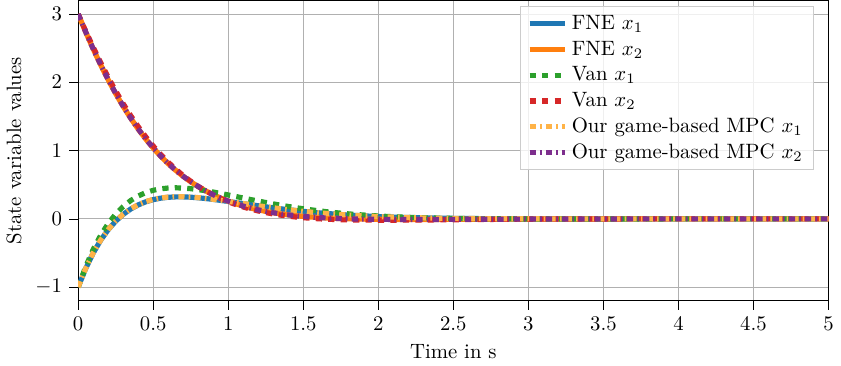}
        \caption{Comparison of exact FNE, van den Broek~\cite{van_den_broek_moving_1999}, and auxiliary-game MPC state trajectories for the two-DoF example system. Solid lines represent FNE, dashed lines represent van den Broek~\cite{van_den_broek_moving_1999}, and dotted lines represent the proposed MPC approach.}
        \label{fig:problem_03_state}    
    \end{figure}

	\begin{figure}[!t]
		\centering
		\includegraphics{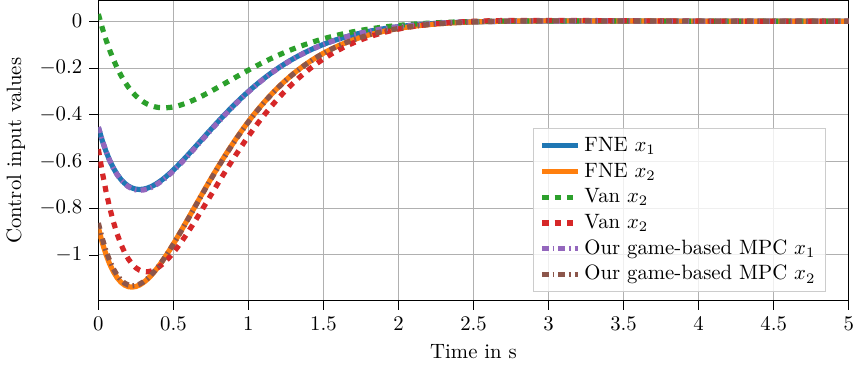}
		\caption{Comparison of the control inputs of the exact FNE, van den Broek~\cite{van_den_broek_moving_1999} and auxiliary-game MPC input trajectories for the two-DoF example system. Line styles follow Figure~\ref{fig:problem_03_state}.}
		\label{fig:problem_03_input}
	\end{figure}
    
    While the solution based on \cite{van_den_broek_moving_1999} shows a maximum deviation of $\varepsilon_{\text{max}}^{\text{Van}} = 0.1397$ from the FNE solution, our novel optimization with terminal costs reduces the error to $\varepsilon_{\text{max}}^{\text{MPC}} = 0.01$. The improvement is more visible in the control inputs: The solution according to \cite{van_den_broek_moving_1999} shows a larger deviation from the FNE solution, while the MPC solution is comparatively close to the FNE solution. Including the coupling terms $R_{ij}$ in the cost functions, the proposed method demonstrates the ability to produce a solution that converges closely to the FNE.

\subsection{4 Degree-of-Freedom Example}\label{subsubsec_example_3}

    This example evaluates the robustness and scalability of the proposed method by considering asymmetric cost functions in combination with increased model complexity. In contrast to the previous example, the system comprises four state variables and two-dimensional control inputs for each player. The analysis focuses on the trajectory errors relative to the FNE solution to assess the effectiveness of the auxiliary-game approach for higher-dimensional  systems.
    
    Consider the following dynamic system with four state variables
    \begin{align}
        \dot{x} &= \begin{bmatrix} -8 & -6 & 1 & 0 \\ 1 & 0 & 2 & 1 \\ 0 & -2 & 0 & 1 \\ 0 & 1 & 0 & -1 \end{bmatrix} x + \begin{bmatrix} 0 & 1 \\ 0 & 0 \\ 0 & 0 \\ 1 & 0 \end{bmatrix} u_1 + \begin{bmatrix} 0 & 0 \\ 0 & 0 \\ 1 & 0 \\ 0 & 1 \end{bmatrix} u_2  
    \end{align}
    and for the local game the asymmetric cost functions
    \begin{align}
        J_1 &= \int^{T}_0 x^\top Q_1 x + u_1^\top R_{11} u_1 + u_2^\top R_{12} u_2  \, dt \\
        J_2 &= \int^{T}_0 x^\top Q_2 x + u_1^\top R_{21} u_1 +  u_2^\top R_{22} u_2  \, dt 
    \end{align}
    in which parameters are defined as:
    $$
        Q_1 = \begin{bmatrix} 1 & 0 & 0 & 0 \\ 0 & 0.4 & 0 & 0 \\ 0 & 0 & 2 & 0 \\ 0 & 0 & 0 & 1 \end{bmatrix}, \quad Q_2 = \begin{bmatrix} 1 & 0 & 0 & 0 \\ 0 & \frac{2}{3} & 0 & 0 \\ 0 & 0 & 1 & 0 \\ 0 & 0 & 0 & 2 \end{bmatrix} 
    $$
    $$
        R_{11} = \begin{bmatrix} 1 & 0 \\ 0 & 1 \end{bmatrix}, \quad R_{12} = \begin{bmatrix} 0.5 & 0 \\ 0 & 0.5 \end{bmatrix}, \quad R_{21} = \begin{bmatrix} 1 & 0 \\ 0 & 1 \end{bmatrix}, \quad  R_{22} = \begin{bmatrix} 1 & 0 \\ 0 & 1 \end{bmatrix}.
    $$

    Figure \ref{fig:problem_04_state} illustrates the state trajectories of the four state variables, while Figure \ref{fig:problem_04_input1} and Figure \ref{fig:problem_04_input2} show the control input trajectories of the first and second player.
    For the proposed method, the maximum deviation from the FNE trajectory is    $\varepsilon_{\text{max}}^{\text{MPC}} = 0.0271$ and the accumulated error is    $\varepsilon_{\text{sum}}^{\text{MPC}} = 0.0178$, compared with    $\varepsilon_{\text{max}}^{\text{Van}} = 0.0512$ and $\varepsilon_{\text{sum}}^{\text{Van}} = 0.1328$ for the method of \cite{van_den_broek_moving_1999}. The auxiliary-game formulation thus halves the peak deviation and reduces the accumulated error by roughly a factor of seven, confirming that the approach retains its accuracy for higher-dimensional dynamics with coupled cost terms.

    The results show that the method presented is also able to reduce the error between the state trajectories and achieve a better approximation of the FNE solution even with more complex system dynamics.

    \begin{figure}[!t]
        \centering
        \includegraphics{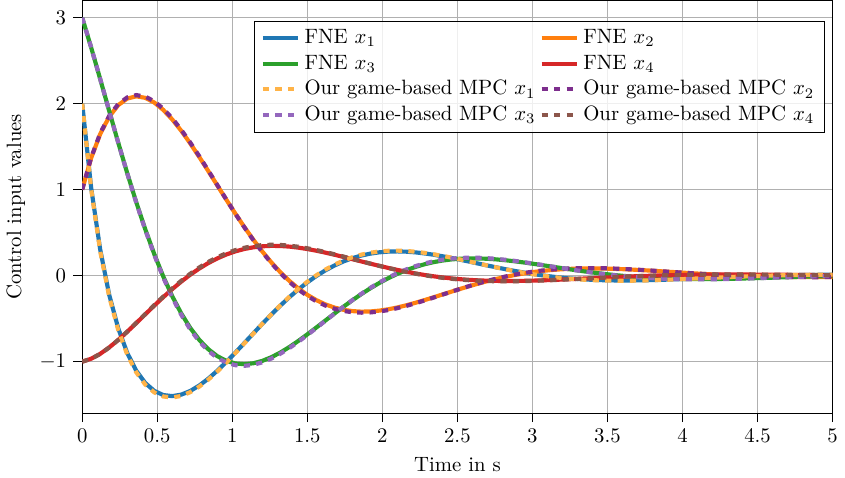}
        \caption{Comparison of state trajectories of the exact FNE and auxiliary-game MPC state trajectories for the four-DoF example system. Solid lines represent FNE and dotted lines represent the proposed MPC approach.}
        \label{fig:problem_04_state}
    \end{figure}
    \begin{figure}[!t]
        \centering
        \includegraphics{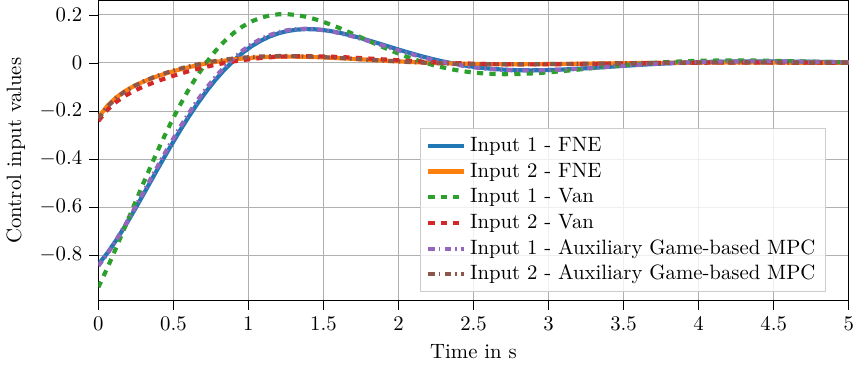}
        \caption{Comparison of the first player's control inputs of the exact FNE, van den Broek~\cite{van_den_broek_moving_1999}, and auxiliary-game MPC input trajectories for Player~1 in the four-DoF example system. Solid lines represent FNE, dashed lines represent van den Broek~\cite{van_den_broek_moving_1999}, and dotted lines represent the proposed MPC approach.}
        \label{fig:problem_04_input1}
    \end{figure}
    \begin{figure}[!t]
        \centering
		\includegraphics{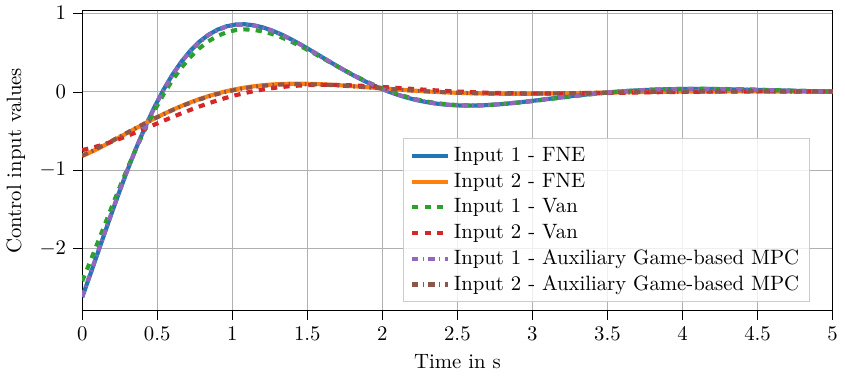}
        \caption{Comparison of teh second player's exact FNE, van den Broek~\cite{van_den_broek_moving_1999}, and auxiliary-game MPC input trajectories for Player~2 in the four-DoF example system. Solid lines represent FNE, dashed lines represent van den Broek~\cite{van_den_broek_moving_1999}, and dotted lines represent the proposed MPC approach.}
        \label{fig:problem_04_input2}
    \end{figure}

\subsection{Constrained Differential Game Example}\label{subsec_constrained_example}
To demonstrate that the constrained formulation of Section~\ref{sec_auxiliary_game_constrained_description} is functional, an underdamped oscillator system is analyzed with an additional state constraint. The system dynamics are given by
\begin{align}
\dot{x}(t) &= A x(t) + B_1 u_1(t) + B_2 u_2(t),
\end{align}
with
\begin{align}
A &=
\begin{bmatrix} 0 & 1 \\ -\omega^2 & -2\zeta\omega \end{bmatrix} =
\begin{bmatrix} 0 & 1 \\ -9 & -0.36 \end{bmatrix}, \\
B_1 &= B_2 =
\begin{bmatrix} 0 \\ 1 \end{bmatrix},
\end{align}
where $\omega=3.0$ and $\zeta=0.06$. The cost function parameters are defined as
\begin{align}
Q_1 &= Q_2 =
\begin{bmatrix} 100 & 0 \\ 0 & 1 \end{bmatrix}, \\
R_{11} &= R_{22} =
\begin{bmatrix} 1 \end{bmatrix}, \\
R_{12} &= R_{21} =
\begin{bmatrix} 0.5 \end{bmatrix}.
\end{align}
The initial condition is
\begin{align}
x_0 &=
\begin{bmatrix} 2.25 \\ 1.0 \end{bmatrix},
\end{align}
and the simulation horizon is $t_f = 4.0\text{ s}$. A state constraint is imposed on the first state variable:
\begin{equation}
x_1(t) \ge -0.15 \quad \forall t \ge 0,
\end{equation}
which represents a safe lower bound on the position variable. A second state constraint is imposed on the second state variable:
\begin{equation}
x_2(t) \ge -2.75 \quad \forall t \ge 0,
\end{equation}
which represents a safe lower bound on the velocity variable. These constraints are enforced in each local auxiliary game according to Definition~\ref{LocalGame_constrained}. The constrained local open-loop games are solved numerically at each sampling instant with a dense plotting resolution of $\Delta t = 0.005\text{ s}$, and only the first control action is applied.

Figure~\ref{fig:state_comparison_constrained_vs_naive} shows the resulting closed-loop state trajectories. The constrained MPC trajectory (blue, solid) satisfies the state constraints $x_1 \ge -0.15$ and $x_2 \ge -2.75$ at all times, while the unconstrained planner (orange, dashed) violates this bound during the initial transient. This confirms that the proposed auxiliary-game MPC scheme retains the constraint-handling capability of standard MPC while approximating equilibrium behavior.

Figure~\ref{fig:input_comparison_constrained_vs_naive} illustrates the control inputs for Player~1. The constraint-aware planner respects the actuator bounds while maintaining feasibility of the state trajectory. In contrast, the unconstrained planner demands aggressive control actions that would be required to achieve the unconstrained FNE but are incompatible with the safety constraints.

\textbf{Remark:} This example serves as a qualitative proof-of-concept to demonstrate that the constrained formulation is functional and enforces hard state constraints. A systematic quantitative analysis of the approximation quality with respect to the constrained infinite-horizon FNE (analogous to Sections~\ref{subsubsec_example_2} and~\ref{subsubsec_example_3}) is beyond the scope of this work and is left for future research. Nevertheless, these results demonstrate that the proposed method successfully enforces hard state constraints while maintaining the game-theoretic structure of the multi-agent interaction.

\begin{figure}[!t]
    \centering
    \includegraphics{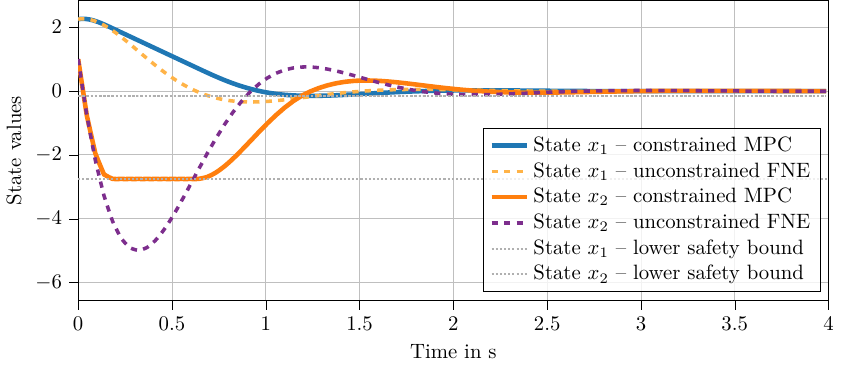}
    \caption{Comparison of state trajectories between constrained MPC and unconstrained FNE on the saturated real plant. Solid lines represent the constrained MPC, dashed lines represent the unconstrained (naive) FNE, and the dotted line indicates the lower safety bound on $x_1$ and $x_2$.}
    \label{fig:state_comparison_constrained_vs_naive}
\end{figure}

\begin{figure}[!t]
    \centering
    \includegraphics{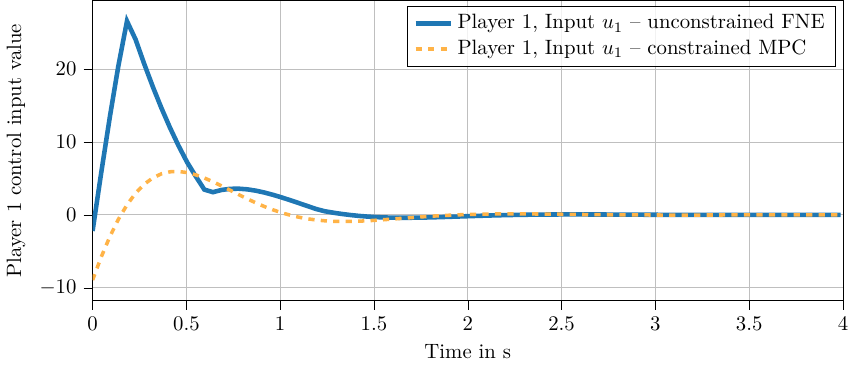}
    \caption{Comparison of control input trajectories for Player~1 between constraint-aware and unconstrained planners. Solid lines represent the constraint-aware planner, dashed lines represent the unconstrained planner.}
    \label{fig:input_comparison_constrained_vs_naive}
\end{figure}

\subsection{Upper Bound Estimations of the Trajectory Error}
In this section, the two unconstrained examples of Subsections~\ref{subsubsec_example_2} and~\ref{subsubsec_example_3} are revisited in order to assess the practical applicability and the accuracy of the error estimate derived in Theorem~\ref{Theorem_proof_upper_bound}. For each system, the estimated upper bound is compared with the simulated trajectory error between the MPC and the FNE solution, and the remaining gap between the two is used to identify the strengths and the limitations of the approach. In both figures of this subsection, the plotted quantity is the Euclidean norm $\|\varepsilon(t)\| = \|x^{MPC}(t) - x^{FNE}(t)\|$ of the full state-error vector.

For the two-DoF example with asymmetric cost functions (Subsection~\ref{subsubsec_example_2}), the estimated bound attains a maximum of $0.0155$, whereas the maximum simulated error is $\varepsilon_{\text{max}}^{\text{MPC}} = 0.0100$, as shown in Figure~\ref{fig:problem_02_state_ERR}. For the four-DoF example (Figure~\ref{fig:problem_04_state_ERR}), the estimated bound attains a maximum of $0.371$ against a maximum simulated error of $\varepsilon_{\text{max}}^{\text{MPC}} = 0.0271$. In both cases the estimate is a valid upper bound over the entire trajectory, and in both cases the deviation between the MPC and the FNE trajectories is certified to remain small in absolute terms. The margin between bound and simulation is tighter for the two-DoF system and larger for the four-DoF system, which reflects the norm-based construction of \eqref{error_approx_2_20b}: the individual gain mismatches enter through $\sum_{i \in \mathbb{P}} \|B_i \Delta K_i\|$ and both closed-loop matrix exponentials are bounded separately by $M e^{-\lambda t}$. Since these steps discard directional information, the margin naturally widens with the state dimension and the number of coupled inputs. The bound is therefore best understood as an \emph{a priori} certificate that is available without simulating the closed loop, rather than as a sharp prediction of the realized error.

The error estimate relies on norm bounds for the matrix exponential, computed using eigenvalues and eigenvectors. The bound employed in this work is
\begin{align}
    \| e^{At} \| &\leq M e^{-\lambda t}.
\end{align}
Alternative constructions, including Lyapunov-based bounds or logarithmic-norm approaches, may yield different -- and potentially more numerically robust -- estimates.

Since the literature on trajectory-error approximation between MPC and FNE solutions remains limited, the proposed method is intended as a conservative upper bound and a basis for future work on improving accuracy and stability, especially for more complex systems.

    \begin{figure}[!h]
        \centering
		\includegraphics{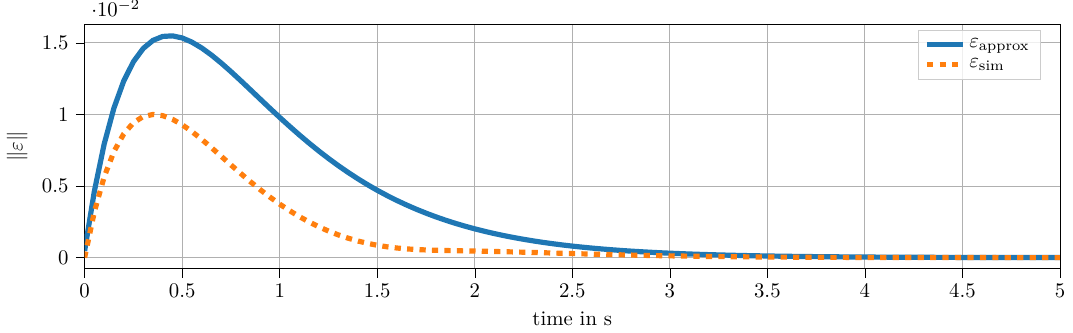}
        \caption{Approximated and simulated trajectory error for the two-DoF example (Section~\ref{subsubsec_example_2}). The estimated upper bound (solid) exceeds the simulated error (dashed).}
        \label{fig:problem_02_state_ERR}
    \end{figure}

    \begin{figure}[!h]
        \centering
        \includegraphics{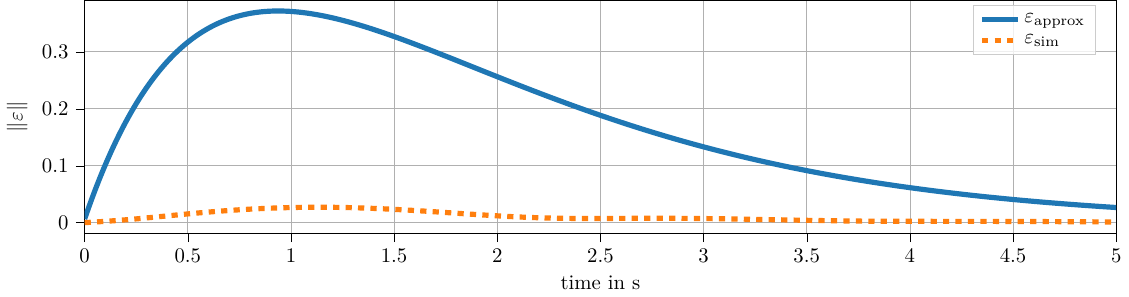}
		\caption{Approximated and simulated trajectory error for the four-DoF example (Section~\ref{subsubsec_example_3}). The estimated upper bound (solid) exceeds the simulated error (dashed).}
        \label{fig:problem_04_state_ERR}
    \end{figure}

\subsection{Discussions}

The numerical examples demonstrate that the auxiliary-game MPC formulation consistently reduces the trajectory deviation from the FNE compared to the baseline horizon-only optimization of \cite{van_den_broek_moving_1999}. In the first example, the maximum state error was substantially reduced, while in the second example -- involving a higher-dimensional system with four states and asymmetric coupling terms -- both the maximum error and the cumulative error over the trajectory were reduced. These results suggest that the additional degrees of freedom introduced by the terminal cost matrices are particularly effective when the cost structure includes significant cross-coupling between players, as captured by the $R_{ij}$ terms that are absent from the baseline formulation.

The upper bound derived in Theorem~\ref{Theorem_proof_upper_bound} was validated against simulated trajectory errors in both examples. The estimated bound exceeded the simulated error in both examples, indicating that the bound is practically useful and not excessively conservative in the cases considered. The tightness of the bound depends on the quality of the matrix exponential estimate, which in this work was obtained via eigenvalue decomposition. Alternative estimation techniques, such as those based on logarithmic norms or Lyapunov equations, may yield tighter bounds in cases where the eigenvalues of the closed-loop system matrices are closely spaced or where the eigenvector matrix is ill-conditioned.

Some limitations should be acknowledged. First, the auxiliary-game optimization problem in Definition~\ref{def_NovelOptimizationApproach} is nonconvex, and the quality of the solution depends on the initialization of the optimizer. No global optimality guarantee is provided for the choice of $T^*$, $Q_{1T}^*$, and $Q_{2T}^*$. Second, the constrained extension presented in Section~\ref{ch:Approx_FNE} is formulated at the problem level but not yet validated numerically in this work; its practical performance remains to be investigated. Third, the upper bound in Theorem~\ref{Theorem_proof_upper_bound} applies only to the unconstrained case. Extending the error bound to constrained settings -- where the closed-loop dynamics are state-dependent and potentially nonsmooth -- is a nontrivial open problem. Finally, the current formulation is restricted to LQ games with time-invariant parameters. Extending the approach to time-varying or nonlinear dynamics, for instance via successive linearization within the MPC loop, is a natural direction for future work.

Despite these limitations, the method offers a practical advantage over policy iteration: it does not require an initial stabilizing feedback gain, and the local open-loop games can be solved using standard optimal control tools compatible with existing MPC infrastructure.

\section{Conclusion and Outlook} \label{sec_summary_approximation_of_fne}

This paper presented an MPC-based method for approximating infinite-horizon feedback Nash equilibria in LQ differential games. The approach extends the baseline moving-horizon game formulation of \cite{van_den_broek_moving_1999} by introducing terminal cost matrices as additional optimization variables in each local open-loop game, thereby increasing the degrees of freedom available to reduce the mismatch with the global feedback-game optimality conditions. Cross-coupling terms between players were explicitly incorporated into the residual minimization, addressing a limitation of prior work that considered only the prediction horizon as a tunable parameter.

The formulation was further extended to accommodate state and input constraints by embedding the auxiliary-game structure within a constrained MPC framework, where each local game is solved as a constrained open-loop Nash problem. For the unconstrained setting, an analytic upper bound on the state-trajectory deviation between the MPC-induced closed loop and the exact FNE was derived using matrix exponential norm estimates. The bound depends on the feedback gain mismatch, the stability margins of both closed-loop systems, and the initial condition, and was shown to be practically informative in the numerical examples considered.

Future work will focus on three directions: (i)~numerical validation of the constrained formulation on benchmark systems relevant to shared control and physical human--machine interaction, see~\cite{handwerker2026iterated}, (ii)~investigation of tighter matrix exponential bounds to improve the conservatism of the error estimate, and (iii)~extension of the approach to nonlinear differential games via iterative linearization within the receding-horizon loop.

\newpage
\medskip
\textbf{Data Availability Statement} \par
The data that support the findings of this study are available from the corresponding author upon request.

\medskip
\textbf{Conflict of Interest} \par
The authors declare no conflict of interest.

\medskip
\textbf{Funding} \par
This work is related to the MedLaBotX project (2024-1.2.3-HU-RIZONT-00069).
Project 2024-1.2.3-HU-RIZONT-00069 has been implemented with support provided by the Ministry of Culture and Innovation of Hungary from the National Research, Development, and Innovation Fund, financed under the 2024-1.2.3-HU-RIZONT funding scheme.

\medskip

%

\newpage
\section*{Appendix}

\subsection*{Estimation of the Matrix Exponential Function}
The proof of Theorem \ref{Theorem_proof_upper_bound} intentionally leaves the choice of estimation method for the matrix exponential open, as various approaches for estimating the upper bounds of  $e^{A^{MPC}}$ and $e^{A^{FNE}}$ are available and well-documented in the literature. In the following, one example of such an estimation is demonstrated, while references to additional possible approaches are provided for further research.
    
    \begin{align}
        \| e^{At} \| &\leq M e^{- \lambda t}.
    \end{align}
    where the constants $M \in \mathbb{R}^+$ and $\lambda \in \mathbb{R}^+$ must be chosen appropriately. Throughout, let
    \begin{equation}\label{eq_spectral_abscissa}
        \alpha(A) := \max_i \operatorname{Re}(\lambda_i)
    \end{equation}
    denote the spectral abscissa of $A$, where $\lambda_1,\dots,\lambda_n$ are the eigenvalues of $A$. Since $A$ is assumed to be Hurwitz, $\alpha(A)<0$.

    \paragraph{Diagonalizable case.}
    A possible estimate is based on the eigenvalues of the matrix $A$ over the decomposition of $A$ into its eigenvectors and the representation
    \begin{align}
        e^{At} &= P e^{\Lambda t} P^{-1}, \quad A = P \Lambda P^{-1}
    \end{align}
    where $\Lambda = \text{diag}(\lambda_1, \lambda_2, \dots, \lambda_n)$ is the diagonal matrix of the eigenvalues of $A$ and $P$ is the corresponding eigenvector matrix. For the norm of $e^{At}$, using submultiplicativity and norm estimation, the following result is obtained:
    \begin{align}
        \| e^{At} \| &= \| P e^{\Lambda t} P^{-1} \| \\
        &\leq \| P \| \cdot \| e^{\Lambda t} \| \cdot \| P^{-1} \|.
    \end{align}
    The exponential function of the diagonal matrix $\Lambda$ is applied component-wise to the eigenvalues:
    \begin{align}
        e^{\Lambda t} &= \text{diag}(e^{\lambda_1 t}, e^{\lambda_2 t}, \dots, e^{\lambda_n t}),
    \end{align}
    so that, in the spectral norm,
    \begin{align}
        \| e^{\Lambda t} \| &\leq e^{\alpha(A) t}, \qquad t\geq 0.
    \end{align}
    This leads to the upper bound
    \begin{align}\label{eq_appendix_diagonalizable_bound}
        \| e^{At} \| &\leq \| P \| \cdot \| P^{-1} \| \cdot e^{\alpha(A) t},
    \end{align}
    where $ \| P \| $ and $ \| P^{-1} \| $ are the constant norms of the eigenvector matrix and its inverse. Hence \eqref{eq_appendix_diagonalizable_bound} is of the required form with $M = \|P\|\cdot\|P^{-1}\| = \kappa(P) \geq 1$ and $\lambda = -\alpha(A) > 0$: the decay rate attains the spectral abscissa, and the constant $M$ is the condition number of the eigenvector matrix. Accordingly, the bound degrades when $P$ is ill-conditioned, i.e., when $A$ is close to a defective matrix.

    \paragraph{Non-diagonalizable case.}
    If the matrix $A$ is not diagonalizable, the Jordan normal form is used instead:
    \begin{equation}
        A = P J P^{-1}, \qquad e^{At} = P e^{Jt} P^{-1},
    \end{equation}
    where $J$ is the Jordan normal form of $A$. In this case $e^{Jt}$ is block diagonal, and the block associated with an eigenvalue $\lambda_i$ of a Jordan block of size $\nu_i$ contains, in addition to $e^{\lambda_i t}$, polynomial terms of degree up to $\nu_i-1$. The estimate \eqref{eq_appendix_diagonalizable_bound} is therefore not directly applicable, since the decay is no longer purely exponential. The required form $\|e^{At}\|\leq Me^{-\lambda t}$ is nevertheless recovered by selecting a decay rate that is strictly slower than the spectral abscissa: for every $\lambda$ with
    \begin{equation}\label{eq_appendix_jordan_rate}
        0 < \lambda < -\alpha(A),
    \end{equation}
    there exists a constant $M \geq 1$ such that $\|e^{At}\| \leq M e^{-\lambda t}$ for all $t\geq 0$. This follows because each entry of $e^{\lambda t}e^{Jt}$ is a finite sum of terms $t^{k}e^{(\lambda_i + \lambda)t}$ with $k \leq \nu_i - 1$ and $\operatorname{Re}(\lambda_i) + \lambda < 0$ by \eqref{eq_appendix_jordan_rate}, and every such term is bounded on $[0,\infty)$; $M$ may then be taken as $\kappa(P)$ times the resulting bound on $\|e^{\lambda t}e^{Jt}\|$. The price of a defective $A$ is thus a larger constant $M$ and a decay rate $\lambda$ that no longer attains $-\alpha(A)$, which makes the bound of Theorem \ref{Theorem_proof_upper_bound} more conservative but leaves its form and validity unchanged.

	Note that a similar proof is presented in \cite{kagstrom_bounds_1977}, but it does not specifically address differential games, and focuses on the applicability of the logarithmic norm for upper bounds of the matrix exponential.


\begin{thebibliography}{99}
\providecommand{\url}[1]{\texttt{#1}}
\providecommand{\urlprefix}{URL }

\bibitem{abbink2012haptic}
D.~A. Abbink, M.~Mulder, E.~R. Boer,
\newblock Haptic shared control: smoothly shifting control authority?,
\newblock \emph{Cognition, Technology \& Work} \textbf{2012}, \emph{14}, 1 19.

\bibitem{mulder2012sharing}
M.~Mulder, D.~A. Abbink, E.~R. Boer,
\newblock Sharing control with haptics: seamless driver support from manual to
  automatic control,
\newblock \emph{Human Factors} \textbf{2012}, \emph{54}, 5 786.

\bibitem{varga_human_2024}
B.~Varga, F.~Flemisch, S.~Hohmann,
\newblock Human in the loop,
\newblock \emph{at -- Automatisierungstechnik} \textbf{2024}, \emph{72}, 12
  1109.

\bibitem{li2019differential}
Y.~Li, G.~Carboni, F.~Gonzalez, D.~Campolo, E.~Burdet,
\newblock Differential game theory for versatile physical interaction,
\newblock \emph{Nature Machine Intelligence} \textbf{2019}, \emph{1}, 1 36.

\bibitem{music2020haptic}
S.~Musi{\'c}, S.~Hirche,
\newblock Haptic shared control for human-robot collaboration: A
  game-theoretical approach,
\newblock In \emph{IFAC-PapersOnLine}, volume~53. \textbf{2020} 10216--10222.

\bibitem{HVROC_2023_Kille}
{S. Kille et al.},
\newblock Human-variability-respecting optimal control for physical
  human-machine interaction,
\newblock In \emph{2024 33rd IEEE International Conference on Robot and Human
  Interactive Communication (ROMAN)}. \textbf{2024} 1595--1602.

\bibitem{varga2024toward}
B.~Varga,
\newblock Toward adaptive cooperation: Model-based shared control using
  lq-differential games,
\newblock \emph{Acta Polytechnica Hungarica} \textbf{2024}, \emph{21}, 10 439
  – 456.

\bibitem{basar_dynamic_1999}
T.~Başar, G.~J. Olsder,
\newblock \emph{Dynamic noncooperative game theory},
\newblock Number~23 in Classics in applied mathematics. SIAM, Philadelphia, 2nd
  ed edition, \textbf{1999}.

\bibitem{engwerda_uniqueness_2008}
J.~Engwerda,
\newblock Uniqueness conditions for the affine open-loop linear quadratic
  differential game,
\newblock \emph{Automatica} \textbf{2008}, \emph{44}, 2 504.

\bibitem{fridovich-keil_iterative_2020}
D.~Fridovich-Keil, V.~Rubies-Royo, C.~J. Tomlin,
\newblock An {Iterative} {Quadratic} {Method} for {General}-{Sum}
  {Differential} {Games} with {Feedback} {Linearizable} {Dynamics},
\newblock In \emph{2020 {IEEE} {International} {Conference} on {Robotics} and
  {Automation} ({ICRA})}. IEEE, Paris, France,
\newblock ISBN 978-1-72817-395-5, \textbf{2020} 2216--2222,
\newblock \urlprefix\url{https://ieeexplore.ieee.org/document/9196517/}.

\bibitem{le_cleach_algames_2020}
S.~Le~Cleac'h, M.~Schwager, Z.~Manchester,
\newblock {ALGAMES}: {A} {Fast} {Solver} for {Constrained} {Dynamic} {Games},
\newblock In \emph{Robotics: {Science} and {Systems} {XVI}}. Robotics: Science
  and Systems Foundation,
\newblock ISBN 978-0-9923747-6-1, \textbf{2020}
  \urlprefix\url{http://www.roboticsproceedings.org/rss16/p091.pdf}.

\bibitem{b_rawlings_model_2024}
J.~B.~Rawlings, D.~Q.~Mayne, M.~M.~Diehl,
\newblock \emph{Model {Predictive} {Control}: {Theory}, {Computation}, and
  {Design} (2nd {Edition})},
\newblock Nob Hill Publishing, LLC, Santa Barbara, California, USA, 2nd edition
  edition, \textbf{2024}.

\bibitem{mayne2000constrained}
D.~Q. Mayne, J.~B. Rawlings, C.~V. Rao, P.~O.~M. Scokaert,
\newblock Constrained model predictive control: Stability and optimality,
\newblock \emph{Automatica} \textbf{2000}, \emph{36}, 6 789.

\bibitem{van_den_broek_moving_1999}
W.~{van den Broek},
\newblock Moving horizon control in dynamic games,
\newblock \emph{Journal of Economic Dynamics and Control} \textbf{2002},
  \emph{26}, 6 937.

\bibitem{saltari2022movinghorizon}
E.~Saltari, W.~Semmler, G.~Di~Bartolomeo,
\newblock A nash equilibrium for differential games with moving-horizon
  strategies,
\newblock \emph{Computational Economics} \textbf{2022}, \emph{60} 1041.

\bibitem{williams2018brmpc}
G.~Williams, B.~Goldfain, P.~Drews, J.~M. Rehg, E.~A. Theodorou,
\newblock Best response model predictive control for agile interactions between
  autonomous ground vehicles,
\newblock In \emph{Proceedings of the IEEE International Conference on Robotics
  and Automation (ICRA)}. \textbf{2018} 2403--2410.

\bibitem{ko2020shared}
S.~Ko, R.~Langari,
\newblock Shared control between human driver and machine based on game
  theoretical model predictive control framework,
\newblock In \emph{Proceedings of the IEEE/ASME International Conference on
  Advanced Intelligent Mechatronics (AIM)}. \textbf{2020} 649--654.

\bibitem{engwerda_algorithms_2007}
J.~Engwerda,
\newblock Algorithms for computing {Nash} equilibria in deterministic {LQ}
  games,
\newblock \emph{Computational Management Science} \textbf{2007}, \emph{4}, 2
  113.

\bibitem{vamvoudakis_multi-player_2011}
K.~G. Vamvoudakis, F.~L. Lewis,
\newblock Multi-player non-zero-sum games: {Online} adaptive learning solution
  of coupled {Hamilton}–{Jacobi} equations,
\newblock \emph{Automatica} \textbf{2011}, \emph{47}, 8 1556.

\bibitem{li_policy_2019}
X.~Li, Z.~Peng, L.~Liang, W.~Zha,
\newblock Policy iteration based {Q}-learning for linear nonzero-sum quadratic
  differential games,
\newblock \emph{Science China Information Sciences} \textbf{2019}, \emph{62}, 5
  52204.

\bibitem{zhou_nonlinear_2024}
Y.~Zhou, Y.~Li, A.~Sheng, G.~Qi,
\newblock A nonlinear multi-player non-zero-sum differential game with state
  constraints and input delay,
\newblock \emph{Nonlinear Dynamics} \textbf{2024}.

\bibitem{jiang_approximate_2020}
F.~Jiang, X.~Guo, X.~Zhang, Z.~Zhang, D.~Dong,
\newblock Approximate {Soft} {Policy} {Iteration} {Based} {Reinforcement}
  {Learning} for {Differential} {Games} with {Two} {Pursuers} versus {One}
  {Evader},
\newblock In \emph{2020 5th {International} {Conference} on {Advanced}
  {Robotics} and {Mechatronics} ({ICARM})}. IEEE, Shenzhen, China,
\newblock ISBN 978-1-72816-479-3, \textbf{2020} 471--476,
\newblock \urlprefix\url{https://ieeexplore.ieee.org/document/9195328/}.

\bibitem{abouheaf_policy_2019}
M.~I. Abouheaf, M.~S. Mahmoud, F.~L. Lewis,
\newblock Policy {Iteration} {Solution} for {Differential} {Games} with
  {Constrained} {Control} {Policies},
\newblock In \emph{2019 {American} {Control} {Conference} ({ACC})}. IEEE,
  Philadelphia, PA, USA,
\newblock ISBN 978-1-5386-7926-5, \textbf{2019} 4301--4306,
\newblock \urlprefix\url{https://ieeexplore.ieee.org/document/8814441/}.

\bibitem{mu_policy-iteration-based_2021}
C.~Mu, K.~Wang, C.~Sun,
\newblock Policy-{Iteration}-{Based} {Learning} for {Nonlinear} {Player} {Game}
  {Systems} {With} {Constrained} {Inputs},
\newblock \emph{IEEE Transactions on Systems, Man, and Cybernetics: Systems}
  \textbf{2021}, \emph{51}, 10 6488.

\bibitem{sani_realtime_2023}
M.~Sani, A.~Hably, B.~Robu, J.~Dumon,
\newblock Real‐time game‐theoretic model predictive control for
  differential game of target defense,
\newblock \emph{Asian Journal of Control} \textbf{2023}, \emph{25}, 5 3343.

\bibitem{di_lamperski_ddp_2019}
B.~Di, A.~Lamperski,
\newblock Newton's method and differential dynamic programming for
  unconstrained nonlinear dynamic games \textbf{2019}, preprint. Available as
  arXiv:1906.09097.

\bibitem{le_cleach_algames_2022}
S.~Le~Cleac'h, M.~Schwager, Z.~Manchester,
\newblock Algames: a fast augmented lagrangian solver for constrained dynamic
  games,
\newblock \emph{Autonomous Robots} \textbf{2022}, \emph{46}, 1.

\bibitem{saltari_nash_2022}
E.~Saltari, W.~Semmler, G.~Di~Bartolomeo,
\newblock A {Nash} {Equilibrium} for {Differential} {Games} with
  {Moving}-{Horizon} {Strategies},
\newblock \emph{Computational Economics} \textbf{2022}, \emph{60}, 3 1041.

\bibitem{benenati_linear-quadratic_2024}
E.~Benenati, S.~Grammatico,
\newblock Linear-{Quadratic} {Dynamic} {Games} as {Receding}-{Horizon}
  {Variational} {Inequalities}, \textbf{2024},
\newblock \urlprefix\url{http://arxiv.org/abs/2408.15703},
\newblock ArXiv:2408.15703 [cs, eess, math].

\bibitem{engwerda_lq_2005}
J.~Engwerda,
\newblock \emph{{LQ} {Dynamic} {Optimization} and {Differential} {Games}},
\newblock John Wiley \& Sons Ltd, The Atrium, Southern Gate, Chichester, West
  Sussex PO19 8SQ, England, \textbf{2005}.

\bibitem{chang_collaborative_2023}
C.~T. Chang, M.~Hebert, B.~Hayes,
\newblock Collaborative {Planning} and {Negotiation} in {Human}-{Robot}
  {Teams},
\newblock In \emph{Companion of the 2023 {ACM}/{IEEE} {International}
  {Conference} on {Human}-{Robot} {Interaction}}. ACM, Stockholm Sweden,
\newblock ISBN 978-1-4503-9970-8, \textbf{2023} 763--765,
\newblock \urlprefix\url{https://dl.acm.org/doi/10.1145/3568294.3579978}.

\bibitem{chiu_what_2024}
C.-Y. Chiu, J.~Li, M.~Bhatt, N.~Mehr,
\newblock To {What} {Extent} {Do} {Open}-{Loop} and {Feedback} {Nash}
  {Equilibria} {Diverge} in {General}-{Sum} {Linear} {Quadratic} {Dynamic}
  {Games}?,
\newblock \emph{IEEE Control Systems Letters} \textbf{2024}, 1--1.

\bibitem{williams_best_2018}
G.~Williams, B.~Goldfain, P.~Drews, J.~M. Rehg, E.~A. Theodorou,
\newblock Best {Response} {Model} {Predictive} {Control} for {Agile}
  {Interactions} {Between} {Autonomous} {Ground} {Vehicles},
\newblock In \emph{2018 {IEEE} {International} {Conference} on {Robotics} and
  {Automation} ({ICRA})}. IEEE, Brisbane, QLD,
\newblock ISBN 978-1-5386-3081-5, \textbf{2018} 2403--2410,
\newblock \urlprefix\url{https://ieeexplore.ieee.org/document/8462831/}.

\bibitem{johnson_approximate_2015}
M.~Johnson, R.~Kamalapurkar, S.~Bhasin, W.~E. Dixon,
\newblock Approximate \${N}\$ -{Player} {Nonzero}-{Sum} {Game} {Solution} for
  an {Uncertain} {Continuous} {Nonlinear} {System},
\newblock \emph{IEEE Transactions on Neural Networks and Learning Systems}
  \textbf{2015}, \emph{26}, 8 1645.

\bibitem{frihauf_nash_2012}
P.~Frihauf, M.~Krstic, T.~Basar,
\newblock Nash {Equilibrium} {Seeking} in {Noncooperative} {Games},
\newblock \emph{IEEE Transactions on Automatic Control} \textbf{2012},
  \emph{57}, 5 1192.

\bibitem{cappello_approximate_2020}
D.~Cappello, T.~Mylvaganam,
\newblock Approximate {Nash} {Equilibrium} {Solutions} of {Linear} {Quadratic}
  {Differential} {Games},
\newblock \emph{IFAC-PapersOnLine} \textbf{2020}, \emph{53}, 2 6685.

\bibitem{varga_upper_2024}
B.~Varga,
\newblock On the {Upper} {Bound} of {Near} {Potential} {Differential} {Games},
\newblock \emph{Results in Applied Mathematics} \textbf{2024}, \emph{22}
  100453.

\bibitem{mironchenko_input--state_2023}
A.~Mironchenko,
\newblock \emph{Input-to-{State} {Stability}: {Theory} and {Applications}},
\newblock Communications and {Control} {Engineering}. Springer International
  Publishing, Cham, \textbf{2023}.

\bibitem{handwerker2026iterated}
K.~Handwerker, B.~Varga,
\newblock Iterated-best-response-based model predictive control for
  shared-control in teleoperation,
\newblock In \emph{Proceedings of the 23rd IFAC World Congress}. IFAC,
  \textbf{2026} In press.

\bibitem{kagstrom_bounds_1977}
B.~Kågström,
\newblock Bounds and perturbation bounds for the matrix exponential,
\newblock \emph{BIT} \textbf{1977}, \emph{17}, 1 39.

\end{thebibliography}
\end{document}